\def\H{\mathcal H}
\def\K{\mathcal K}
\def\P{\mathbb P}
\def\v{\mathfrak v}
\def\Tr{{\rm Tr}}
\def\d{\mathrm{d}}
\def\F{\mathcal F}
\def\S{\mathcal S}
\def\y{\textsc y}
\def\A{{\mathfrak A}}
\def\CC{\mathfrak C}
\def\BBh{{\mathfrak B}^{B}_\hb}
\def\BB{\mathcal B}
\def\X{\mathcal X}
\def\C{\mathbb{C}}
\def\N{\mathbb{N}}
\def\R{\mathbb{R}}
\def\B{{\mathcal B}}
\def\M{\mathcal M}
\def\e{\mathfrak e}
\def\im{\mathop{\mathsf{Im}}\nolimits}
\def\Op{\mathfrak{Op}^A_\hb}
\def\ABh{{\mathfrak A}_\hb^B}
\def\hb{\hbar}
\newtheorem{lemma}{Lemma}[section]
\newtheorem{theorem}[lemma]{Theorem}
\newtheorem{proposition}[lemma]{Proposition}
\newtheorem{definition}[lemma]{Definition}
\newtheorem{remark}[lemma]{Remark}
\numberwithin{equation}{section}
\begin{document}

\title{Coherent states in the presence of a variable magnetic field}

\date{\today}

\author{Marius M\u antoiu$^1$ $^2$, Radu Purice$^2$ and Serge Richard$^3\footnote{On leave from Universit\'e de Lyon;
Universit\'e Lyon 1; CNRS, UMR5208, Institut Camille Jordan, 43 blvd du 11 novembre 1918, F-69622 Villeurbanne-Cedex, France.} $}
\date{\small}
\maketitle
\vspace{-1cm}

\begin{quote}
\emph{
\begin{itemize}
\item[$^1$] Departamento de Matem\'aticas, Universidad de Chile, Las Palmeras 3425, Casilla 653,
Santiago, Chile
\item[$^2$]Institute
of Mathematics "Simion Stoilow" of the Romanian Academy, P.O.  Box
1-764, Bucharest, RO-70700, Romania
\item[$^3$] Department of Pure Mathematics and Mathematical Statistics,
Centre for Mathematical Sciences, University of Cambridge,
Cambridge, CB3 0WB, United Kingdom
\item[] \emph{E-mails:}  mantoiu@imar.ro, purice@imar.ro, sr510@cam.ac.uk
\end{itemize}
  }
\end{quote}

\textbf{2000 Mathematics Subject Classification:} 81R30, 46L89.

\textbf{Key Words:}  Coherent state, quantization, magnetic field,
 magnetic pseudodifferential operators.

\maketitle \abstract{We introduce magnetic coherent states for a
particle in a variable magnetic field. They provide a pure state quantization of the phase space
$\R^{2N}$ endowed with a magnetic symplectic form.}

%....................................................................................................
\section*{Introduction}
%.....................................................................................................

This article is concerned with a spinless non-relativistic particle placed in a variable magnetic field.
Recent publications
\cite{IMP,KO1,KO2,LMR,MP1,MP2,MPR1,Mu} introduced and developed a
mathematical formalism for the observables naturally associated with such a system,
both in a classical and in a quantum framework.
A brief survey of this topic has been exposed in \cite{MP3}.
We now would like to complete the picture, indicating an appropriate way to model the states
of these systems.

Classically, the magnetic field changes the geometry of the phase-space and this is implemented by
a modification of the standard symplectic form.
Consequently, it also modifies the Poisson algebra structure of the smooth functions on phase-space,
which are interpreted as classical observables, see \cite{MR,MP2,MP3} for details.
At the quantum level, one introduces algebras of observables defined only in terms of the magnetic field
\cite{KO1,KO2,MP1,MP3,Mu}. The main new feature is a composition law on symbols defined in terms of fluxes
of the magnetic field through triangles.
With a proper implementation of Planck's constant $\hb$, it has been proved in \cite{MP2} that
the quantum algebra of observables converges to the classical one in the
sense of strict deformation quantization \cite{La3,Ri1,Ri3,Ri4}.

To get the traditional setting involving self-adjoint operators,
the quantum algebra can be represented in Hilbert spaces.
This is realised by choosing any vector potential defining the magnetic field.
In such a way one gets essentially a new pseudodifferential calculus \cite{IMP,IMP',LMR,MP1}
which can be interpreted as a functional calculus for the family of non-commuting operators composed
of positions and magnetic momenta.
When no magnetic field is present, it coincides with the Weyl quantization.
In has been adapted in \cite{BB} to the framework of nilpotent Lie groups.
One of main virtue of this construction is gauge-covariance: equivalent choices of vector potentials
lead to unitarily equivalent representations.
Both the intrinsic and the represented version admit $C^*$-algebraic reformulations \cite{MP3,MPR1}.
They were essential in \cite{MP2} to prove the classical norm-sense limit of the quantum structure
and are also useful in the spectral analysis of magnetic Schr\"odinger operators \cite{MPR2}.

Now, the complete formalism involves coupling the classical and the quantum observables to states.
We refer to \cite{La1,La2,La3,La4} and to references therein for a general presentation and justification
of the concept of state quantization.
We simply recall that the space of pure states of both classical and quantum mechanical systems
are Poisson spaces with a transition probability \cite[Def.~I.3.1.4]{La3}.
In the magnetic case, the classical setting consists in the phase space $\Xi:=\R^{2N}$
endowed with the magnetic symplectic form $\sigma^B$  and the transition probability
defined by
\begin{equation*}
p^{\rm cl}:\Xi\times\Xi\rightarrow [0,1],\qquad p^{\rm cl}(X,Y):=\delta_{XY}\ .
\end{equation*}
On another hand, the pure states space of $\K(\H)$ (the $C^*$-algebra of all the compact operators
in the Hilbert space $\H$) with the $w^*$-topology is homeomorphic to the projective space $\P(\H)$
with its natural topology, see \cite[Prop.~I.2.5.2]{La3} and \cite[Corol.~I.2.5.3]{La3}.
The latter space is also endowed with the $\hb$-dependent Fubini-Study symplectic form $\Sigma'_{\hb}$.
With the interpretation of elements $\v$ of $\P(\H)$ as one dimensional orthogonal projections
$|v\rangle \langle v|$ defined by unit vectors $v \in \H$, the quantum transition probability is given by
\begin{equation}\label{tranzactia}
p^{\rm qu}:\P(\H)\times\P(\H)\rightarrow [0,1],\qquad
p^{\rm qu}(\mathfrak u,\v):=\Tr(|u\rangle \langle u|\!\cdot\!|v\rangle \langle v|)=|\langle u,v\rangle|^2\ .
\end{equation}

A pure state quantization corresponds then to a family of injective embeddings
$\{\v_{\hb}:\Xi\rightarrow \P(\H)\}_{\hb\in (0,1]}$
satisfying a certain set of axioms \cite[Def.~II.1.3.3]{La3}.
In particular, the transition probabilities and the symplectic structures of $\Xi$ and $\P(\H)$
are respectively connected in the limit $\hb\rightarrow 0$.
In our case the embeddings $\v_\hb\equiv \v_\hb^A$ are defined in Definition \ref{enfin}
by the choice of a vector potential
$A$ generating the magnetic field. When composed with the magnetic Weyl calculus,
they furnish pure states $\v_\hb^B$ on a $C^*$-algebra defined intrinsically by the magnetic composition law.
At this level the pure states only depend on the magnetic field and not on any vector potential.
In Theorem \ref{themainTh} we sum up how such a pure state quantization can be achieved in the magnetic case.

In conformity with \cite[Def.~II.1.5.1]{La3} this pure state quantization is coherent,
{\it i.e.}~it can be deduced from a family of continuous injections $v^A_{\hb}:\Xi\rightarrow \H$,
see Definition \ref{bunbun}. In a future article we are going to pursue this,
introducing and studying the Berezin-Toeplitz type quantization associated with these coherent states
as well as the corresponding Bargmann transform.

The structure of this article is the following: In the first section we make a short survey
of the quantization of observables. In Section \ref{oita} we define coherent vectors and coherent states and
study their properties, especially those connected to the limit $\hb\rightarrow 0$.
In the last section we put the results of Section \ref{oita} in the perspective of deformation quantization
of states and observables.

\begin{center}
 --------------------------------------------------------------------------------------------
\end{center}

 \medskip
 {\bf Acknowledgements:} M. M\u antoiu is partially supported by {\it N\'ucleo Cient\' ifico ICM P07-027-F
 "Mathematical Theory of Quantum and Classical Magnetic Systems"} and by the Chilean Science Fundation {\it Fondecyt}
 under the grant 1085162.

 R. Purice acknowledges support from the ANCS Contract No. 2 CEx06-11-18/2006.

 S. Richard is supported by the Swiss National Science Foundation.

%................................................................................................
\section{Quantization of observables}\label{vacoita}
%..................................................................................................

In this Section we recall the structure of the observable algebras of a particle in a variable magnetic field,
both from a classical and a quantum point of view.
We follow the references \cite{MP1,MP2,MP3} which contain further details and technical developments.
Our main purpose is to introduce the basic objects that will be used subsequently and to give motivations.

Let us consider the physical system consisting in a spinless particle moving in the Euclidean space
$\X:=\R^N$ under the influence of a magnetic field. We denote by $\X^*$ the dual space of $\X$.
The duality is simply given by $\X\times\X^*\ni(x,\xi)\mapsto x\cdot\xi$.
The phase space is denoted by $\Xi:=T^*\X\equiv\X\times\X^*$; systematic notations as
$X=(x,\xi)$, $Y=(y,\eta)$, $Z=(z,\zeta)$ will be used for its points.
If no magnetic field is present, the standard symplectic form
\begin{equation}\label{simp}
\sigma(X,Y)\equiv\sigma[(x,\xi),(y,\eta)]:=y\cdot \xi-x\cdot\eta
\end{equation}
prepares $\Xi$ for doing classical mechanics.

The magnetic field is a closed $2$-form $B$ on $\X$ ($\d B=0$), given by the matrix-components
$$
B_{jk}=-B_{kj}:\X\rightarrow \R\qquad j,k=1,\dots,N.
$$
Suitable smoothness will be indicated when necessary; for the moment we simply assume that the components of the
magnetic field are continuous.
The effect of $B$ is to change the geometry of the phase space by adding an extra term to \eqref{simp}:
$\sigma^B:=\sigma+\pi^*B,$ where $\pi^*$ is the pull-back
associated with the cotangent bundle projection $\pi:T^*\X\rightarrow\X$. In coordinates
\begin{equation}\label{simpB}
(\sigma^B)_X(Y,Z)=z\cdot \eta -y\cdot\zeta+B(x)(y,z)=\sum_{j}(z_j\;\!\eta_j-y_j\;\!\zeta_j)+
\sum_{j,k}B_{jk}(x)\;\!y_j\;\!z_k.
\end{equation}

If the components of the magnetic field are smooth, one can associate with this new symplectic form
a canonical Poisson bracket acting on functions $f,g\in C^\infty(\Xi)$ by the formula:
\begin{equation}\label{pison}
\{f,g\}^B=\sum_{j}(\partial_{\xi_j}f\,\partial_{x_j}g-\partial_{\xi_j}g\,\partial_{x_j}f)+
\sum_{j,k}B_{jk}(\cdot)\,\partial_{\xi_j}f\,\partial_{\xi_k}g.
\end{equation}
It is a standard fact that $C^\infty(\Xi;\R)$ endowed with
$\{\cdot,\cdot\}^B$ and with the usual product of functions is {\it a
Poisson algebra}. This means that $\{\cdot,\cdot\}^B$ is a Lie
bracket, $(f,g)\mapsto f g$ is bilinear, associative and
commutative and the Leibnitz rule $\{f,g h\}^B=\{f,g\}^B h+g \{f,h\}^B$
holds for all $f,g,h\in C^{\infty}(\Xi;\R)$.

The point of view of deformation quantization is to suitably modify the classical structure of a Poisson algebra to get a
quantum structure of observables for values of some Planck constant $\hb\in I:=(0,1]$. Eventually, for $\hb\mapsto 0$, the
Poisson algebra will re-emerge in some sense.

The magnetic field $B$ comes into play in defining the observables composition in terms of its fluxes through triangles.
If $a,b,c\in \X$, then we denote by $\langle a,b,c \rangle$ the triangle in $\X$ of vertices $a,b$ and $c$ and set
$$
\Gamma^B\langle a,b,c \rangle:=\int_{\langle a,b,c\rangle}B
$$
for the flux of $B$ through this triangle (invariant integration of a $2$-form through a $2$-simplex).
With this notation and for $f,g: \Xi \to \C$, the formula
\begin{equation}\label{composition}
\left(f\sharp^B_{\hb} g\right)(X)\,:=\,(\pi\hb)^{-2N}\int_\Xi
\int_\Xi \d Y \;\!\d Z\;\!e^{-\frac{2i}{\hb} \sigma(X-Y,X-Z)}\;\!
 e^{-\frac{i}{\hb}\Gamma^B\langle x-y+z,y-z+x,z-x+y\rangle}\;\!f(Y)\;\!g(Z)
\end{equation}
defines a formal associative composition law on functions.
For $B=0$ it coincides with the Weyl composition of symbols in pseudodifferential theory.

The formula \eqref{composition} makes sense and has nice properties under various circumstances. For example, if the
components $B_{jk}$ belong to $C^\infty_{\rm{pol}}(\X)$, the class of smooth functions on $\X$ with polynomial
bounds for all the derivatives, then the Schwartz space $\S(\Xi)$ is stable under $\sharp^B_\hb$.

By denoting by $\M^B_\hb(\Xi)$ the largest space of tempered distributions for which
$$
\sharp^B_{\hb}:\S(\Xi)\times\M^B_\hb(\Xi)\rightarrow\S(\Xi)\quad \rm{and}\quad
\sharp^B_{\hb}:\M^B_\hb(\Xi)\times\S(\Xi)\rightarrow\S(\Xi),
$$
it has be shown in \cite{MP1} that $\M^B_\hb(\Xi)$ is a $^*$-algebra under $\sharp^B_{\hb}$ and under
complex conjugation.
This is a large class of distributions, containing all the bounded measures as well as the class
$C^\infty_{\rm{pol,u}}(\Xi)$
of all smooth functions for which all the derivatives are bounded by some polynomial
(depending on the function, but not on the order of the derivative).
In addition, if we assume that all the derivatives of the functions $B_{jk}$ are bounded,
then the H\"ormander classes of symbols $S^m_{\rho,\delta}(\Xi)$ compose in the usual way under $\sharp^B_{\hb}$.

The most satisfactory approach to introduce norms would be by using twisted $C^*$-dynamical systems and
twisted crossed products, as in \cite{MP2,MP3,MPR1}. To spare space and especially to avoid
using non-trivial facts about $C^*$-algebras, we shall borrow the needed structures from representations.

Being a closed $2$-form in $\X=\R^N$, the magnetic field is exact: it can be written as $B=\d A$ for some $1$-form
$A$ (called {\it vector potential}). It is easy to see that if $B$ is of class $C^\infty_{\rm{pol}}(\X)$,
then $A$ can be chosen in the same class and we shall assume this in the sequel.
The vector potentials enter into the construction by their circulations
$\Gamma^A[x,y]:=\int_{[x,y]}A$ along segments $[x,y]:=\{tx+(1-t)y\mid t\in[0,1]\}$ for any $x,y\in\X$.
For a vector potential $A$ with $\d A=B$ and for $u : \X \to \C$, let us define
\begin{equation}\label{op}
\big[\Op(f)u\big](x):=(2\pi\hb)^{-N}\int_\X \int_{\X^*}\d y\;\! \d\eta \;\!
e^{\frac{i}{\hb}(x-y)\cdot\eta}\;\!
e^{-\frac{i}{\hb}\,\Gamma^A[x,y]}\;\!{\textstyle f\left(\frac{x+y}{2},\eta\right)}\;\!u(y).
\end{equation}
For $A=0$ one recognizes the Weyl quantization, associating with functions or distributions on $\Xi$ linear
operators acting on function spaces on $\X$.

It has been shown that $\Op$, suitably interpreted (by using rather simple duality arguments)
defines a representation of the $^*$-algebra $\M^B_\hb(\Xi)$ by linear continuous operators
$:\S(\X)\rightarrow\S(\X)$. This means, of course, that
$$
\Op\big(f\sharp ^B_\hb g\big)=\Op(f)\;\!\Op(g)\quad {\rm and}\quad \Op(\overline f)=\Op(f)^*
$$
for any $f,g\in\M^B_\hb(\Xi)$. In addition, $\Op$ restricts to an isomorphism from $\S(\Xi)$
to $\BB[\S^*(\X),\S(\X)]$ and extends to an isomorphism from $\S^*(\Xi)$ to $\BB[\S(\X),\S^*(\X)]$
(we set $\BB(\mathcal R,\mathcal T)$ for the family of all linear continuous operators between the topological
vector spaces $\mathcal R$ and $\mathcal T$).
The class of Hilbert-Schmidt operators in the Hilbert space $\H:=L^2(\X)$
coincides with the class of operators of the form $\Op(f)$ with $f \in L^2(\Xi)$.

An important property of \eqref{op} is {\it gauge covariance}. If $A'=A+\d \rho$ for a smooth function $\rho: \X \to \R$ (the equivalent choices $A$ and $A'$  would give the same magnetic field),
then $\mathfrak{Op}^{A'}_\hb(f)=e^{\frac{i}{\hb}\rho}\;\!\Op(f)\;\!e^{-\frac{i}{\hb}\rho}$.
Such a unitary equivalence would not hold for the wrong quantization appearing in the literature
$$
\big[\mathcal{O}p_{A,\hb}(f)u\big](x):=(2\pi\hb)^{-N}\int_\X \int_{\X^*}\d y \;\! \d \eta\;\!
e^{\frac{i}{\hb}(x-y)\cdot\eta}{\textstyle f\left(\frac{x+y}{2},\eta-A\left(\frac{x+y}{2}\right)\right)}\;\!u(y).
$$

The operator norm $\|\cdot\|$ on $\BB(\H)$ being relevant in Quantum Mechanics, we
pull it back by setting
$$
\|\cdot\|^B_\hb : \S(\Xi)\rightarrow\R_+ \quad {\rm with}\quad
\|f \|^B_\hb:=\| \Op(f) \|.
$$
By gauge covariance, it is clear that $\|\cdot\|^B_\hb$ only depends on the magnetic field
$B$ and not on the vector potential $A$. We denote by $\ABh$ the completion of
$\S(\Xi)$ under $\|\cdot\|^B_\hb$. It is a $C^*$-algebra that can be identified with a vector subspace of
$\S^*(\Xi)$ and $\Op:\ABh \rightarrow \BB(\H)$ is a faithful $^*$-representation, with
$\Op\big[\ABh\big] = \K(\H)$, the $C^*$-algebra of compact operators in $\H$.

Many other useful $C^*$-algebras can be defined in this manner. An important one is $\BBh$, defined such
that $\Op:\BBh\rightarrow \BB(\H)$ is an isomorphism. The "magnetic version " of the
Calderon-Vaillancourt theorem, proved in \cite{IMP}, says that if $B_{jk}\in BC^\infty (\X)$ for $j,k = 1,\dots,N$,
then the Fr\'echet space $BC^\infty (\Xi)$ of smooth functions on $\Xi$ having
bounded derivatives of any order is continuously embedded in $\BBh$. On the other hand, $C^*$-algebras modelling
magnetic pseudodifferential operators "with anisotropic coefficients" are available by twisted crossed product techniques.
In this setting, our $\ABh$ is isomorphic to $C_0 (\X){\rtimes}^{{\omega}^B}_\tau \X$;
we refer to \cite{LMR,MPR1,MPR2} for details and applications.

To summarize, for any $\hb \in (0,1]$ we have defined a $C^*$-algebra $\ABh$ embedded into
$\S^*(\Xi)$ and isomorphic by $\Op$ to the $C^*$-algebra of compact operators on $\H$.
The product is essentially given by \eqref{composition}.

To justify \eqref{op} let us define a family $(\e_X)_{X\in\Xi}\subset C^\infty_{\rm{pol,u}}(\Xi)$ by $\e_X(Z):=e^{-i\sigma(X,Z)}$.
For suitable functions $f:\Xi\rightarrow\C$ and with suitable interpretation, one has
\begin{equation*}
f(Y)=(2\pi)^{-N}\int_\Xi \d X\;\![\mathfrak F_\Xi f] (X)\;\!e^{-i\sigma(X,Y)}
=(2\pi)^{-N}\int_\Xi \d X\;\![\mathfrak F_\Xi f](X)\;\! \e_X(Y)\ ,
\end{equation*}
where $\mathfrak F_\Xi f$ is \emph{the symplectic Fourier transform of $f$}. So a good quantization should have the property
\begin{equation}\label{just}
\Op(f)=(2\pi)^{-N}\int_\Xi \d X\;\![\mathfrak F_\Xi f](X)\;\! \Op(\e_X).
\end{equation}
Thus, the problem is to justify a choice for the operators $W^A_\hb(X):=\Op(\e_X)$ acting in $\H$.

In the presence of a magnetic field $B=\d A$, a basic family of self-adjoint operators is
$$
\big(Q,\Pi^A_\hb\big)\equiv \big(Q_1,\dots,Q_N, \Pi^A_{\hb,1},\dots,\Pi^A_{\hb,N}\big)\ ,
$$
where $Q_j$ is the operator of multiplication by the coordinate function $x_j$ and
$\Pi^A_{\hb,j}:=-i\hb\partial_j-A_j$ is the $j$-th component of the magnetic momentum. They satisfy the commutation relations
\begin{equation*}
i[Q_j,Q_k]=0,\quad i[\Pi^A_{\hb,j},Q_k]=\hb\;\!\delta_{j,k},\quad i[\Pi^A_{\hb,j},\Pi^A_{\hb,k}]=-\hb\;\! B_{jk}\ .
\end{equation*}
Then, by admitting that the quantization of the function $X\mapsto\e_Y(X)$ should be the unitary operator
$$
\exp\big[-i\sigma\big((y,\eta),(Q,\Pi^A_{\hb})\big)\big]=\exp\big[-i(Q\cdot\eta-y\cdot\Pi^A_{\hb})\big],
$$
one obtains by an explicit computation (relying for instance on Trotter's formula) {\it the magnetic Weyl system}
\begin{equation}\label{tro}
W^A_\hb(y,\eta)=e^{-i\,(Q+\frac{\hb}{2}y)\cdot\eta}\;\!e^{-\frac{i}{\hb}\Gamma^A[Q,Q+{\hb}y]}\;\!e^{i\,y\cdot\hb\,D},
\end{equation}
with $D_j:=-i\partial_j$. When applied to $u \in \H$ and for $x \in \X$, it explicitly gives
$$
[W^A_\hb (Y)u](x)=e^{-i(x+\frac{\hb}{2}y)\cdot\eta}\;\!e^{-\frac{i}{\hb}\Gamma^A[x,x+\hb
y]}\;\!u(x+\hb\;\!y) \ .
$$
Note that for $A=0$ one recognizes the usual Weyl system $W_\hb(y,\eta)$.
Finally, by plugging this into \eqref{just} leads to \eqref{op}, which may be considered as one of the possible justifications of the
formalism.

%............................................................................................................
\section{Magnetic coherent states}\label{oita}
%............................................................................................................

Let us fix a unit vector $v\in\H := L^2 (\X)$, and for any $\hb \in I:=(0,1]$
we define the unit vector $v_\hb\in\H$ by $v_\hb(x):=\hb^{-N/4}v\big(\frac{x}{\sqrt{\hb}}\big)$.
Using the non-magnetic Weyl system, for every $Y\in\Xi$ we set $v_\hb(Y):=W_\hb(-Y/\hb)\;\!v_{\hb}$.
For any choice of a continuous vector potential $A$ generating the magnetic field $B$, we then define
$$
v^A_{\hb}(Z):=e^{\frac{i}{\hb}\;\!\Gamma^A[z,Q]}\;\!v_\hb(Z)=e^{\frac{i}{\hb}\;\!\Gamma^A[z,Q]}\;\!
W_\hb(-Z/\hb)\;\!v_{\hb}.
$$
Explicitly, this means
\begin{equation}\label{mur}
\left[v^A_{\hb}(Z)\right](x)=e^{\frac{i}{\hb}(x-\frac{z}{2}) \cdot\zeta}
\;\!e^{\frac{i}{\hb}\Gamma^A[z,x]}\;\!v_\hb (x-z).
\end{equation}

The pure state space of $\K(\H)$ can be identified with the projective space $\P(\H)$;
considering the isomorphism
$\Op:\ABh\rightarrow \K(\H)$, it is natural to introduce the following families
of pure states on the two $C^*$-algebras:

\begin{definition}\label{enfin}
For any $Z\in\Xi$ we define $\v^A_{\hb}(Z):\mathcal K(\H)\rightarrow\C$ by
$$
\left[\v^A_{\hb}(Z)\right](S):=
\Tr\big(\left\vert v^A_{\hb}(Z)\right> \left< v^A_{\hb}(Z) \right\vert S \big)\equiv
\left< v^A_{\hb}(Z),S\;\!v^A_{\hb}(Z)\right>,
$$
for any $S\in\K(\H)$, and $\v^B_{\hb}(Z):\ABh\rightarrow\C$ by
$$
\left[\v^B_{\hb}(Z)\right](f):=\left[\v^A_{\hb}(Z) \right]\big(\Op(f)\big)=
\big< v^A_{\hb}(Z),\Op(f)\;\!v^A_{\hb}(Z)\big>
$$
for any $f\in\ABh$.
\end{definition}

The intrinsic notation $\mathfrak v^B_{\hb}(Z)$ is justified by a straightforward computation
leading for $Z = (z,\zeta)$ to
\begin{eqnarray}\label{insfarsit}
&&\left[\v^B_{\hb}(Z)\right](f) \\
\nonumber &=&(2\pi\hb)^{-N}\int_\X\int_\X\int_{\X^*}\d x\,\d y\,\d\eta\,
e^{\frac{i}{\hb}(x-y)\cdot(\eta-\zeta)}{\textstyle f\left(\frac{x+y}{2},\eta\right)}
\;e^{-\frac{i}{\hb}\Gamma^B\langle z,x,y\rangle}\;\!\overline{v_\hb(x-z)}\;\!v_\hb(y-z)\\
\nonumber &=&(2\pi\hb)^{-N}\int_\X\int_\X\int_{\X^*}\d x\,\d y\,\d\eta\,
e^{\frac{i}{\hb}(x-y)\cdot\eta}{\textstyle f\left(z+\frac{x+y}{2},\zeta+\eta\right)}
\;e^{-\frac{i}{\hb}\Gamma^B\langle z,x+z,y+z\rangle}\;\!\overline{v_\hb(x)}\;\!v_\hb(y) \\
\nonumber &=&(2\pi)^{-N}\int_\X\int_\X\int_{\X^*}\d x\,\d y\,\d\eta\,
e^{i(x-y)\cdot\eta}{\textstyle f\left(z+\frac{\sqrt\hb}{2}(x+y),\zeta+\sqrt\hb\eta\right)}
\;e^{-\frac{i}{\hb}\Gamma^B\langle z,z+\sqrt\hb x,z+\sqrt\hb y\rangle}\;\!\overline{v(x)}\;\!v(y)\ .
\end{eqnarray}

\begin{definition}\label{bunbun}
The family $\{v^A_{\hb}(Z)\,|\,Z\in\Xi,\,\hb\in I\}$ given by \eqref{mur} is called
\emph{the family of magnetic coherent vectors associated with the pair} $(A,v)$.
The elements of the families $\{\v^A_{\hb}(Z)\,|\,Z\in\Xi,\,\hb\in I\}$ and
$\{\v^B_{\hb}(Z)\,|\,Z\in\Xi,\,\hb\in I\}$ will be called \emph{the coherent states}.
\end{definition}

\begin{remark}\label{emark}
{\rm Our strategy for defining coherent states is quite remote of the standard one, consisting in propagating a given state along the orbit of a (projective) representation. Using for this purpose the magnetic Weyl system \eqref{tro}, which is a very general type of projective representation, would have given an $A$-dependent family of states on the intrinsic $C^*$-algebras $\mathfrak A^B_\hb$, which cannot be admitted. We tried other choices, which lead to $B$-depending states but which have weaker properties concerning the limit $\hb\rightarrow 0$. It is not clear to us how to put our approach in the perspective of covariant quantization of phase space \cite{AAGM,La4}.}
\end{remark}

\begin{remark}\label{rima}
{\rm Note however that for the standard Gaussian $v(x)=\pi^{-N/4}e^{-x^2/2}$ and
for $A=0$ one gets the usual coherent states of Quantum Mechanics
(see for example \cite{AAGM}). We insist of the
fact that the state corresponding to $Z=0$ is built upon an arbitrary unit vector of $\H$.
The standard Gaussian choice (generating a holomorphic setting in the absence of a
magnetic field) has no relevance at this stage. For part of our results, however,
some smoothness or decay properties will be needed.}
\end{remark}

The first result, basic to any theory involving coherent states, says that
\begin{equation*}
\int_{\Xi}\frac{\d Y}{(2\pi\hb)^N}\; |v^A_{\hb}(Y)\rangle\langle v^A_{\hb}(Y)|=1.
\end{equation*}
For convenience, we shall sometimes use the shorter notation
$\gamma^A_\hb(x,y)$ for $e^{-\frac{i}{\hb}\Gamma^A[x,y]}$.

\begin{proposition}\label{ax1}
Assume that the magnetic field $B$ is continuous and let $v$ be a unit vector in $\H$.
For any $\hb\in I$ and $u\in\H$ with $\|u\|=1$, one has
\begin{equation}\label{par}
\int_{\Xi}\frac{dY}{(2\pi\hb)^N}\;\big|\langle v^A_{\hb}(Y),\,u\rangle\big|^2=1.
\end{equation}
\end{proposition}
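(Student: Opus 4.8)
The plan is to reduce the magnetic resolution of identity to Plancherel's theorem (for the integration over the momentum variable) together with the translation invariance of Lebesgue measure (for the integration over the configuration variable); the vector potential will enter only through a unimodular multiplicative factor that drops out at the first of these two steps.

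First I would eliminate the vector potential. Since $e^{\frac{i}{\hb}\Gamma^A[z,Q]}$ is the operator of multiplication by the unimodular function $x\mapsto e^{\frac{i}{\hb}\Gamma^A[z,x]}$ it is unitary, and by definition $v^A_\hb(Z)=e^{\frac{i}{\hb}\Gamma^A[z,Q]}v_\hb(Z)$ with $v_\hb(Z)=W_\hb(-Z/\hb)v_\hb$ the non-magnetic coherent vector; hence, writing $Y=(y,\eta)$, one has $\langle v^A_\hb(Y),u\rangle=\langle v_\hb(Y),\tilde u_y\rangle$ with $\tilde u_y:=e^{-\frac{i}{\hb}\Gamma^A[y,Q]}u\in\H$. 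The essential point is that $|\tilde u_y(x)|=|u(x)|$ for all $x$, so that $\|\tilde u_y\|=\|u\|=1$. As the integrand $Y\mapsto|\langle v^A_\hb(Y),u\rangle|^2$ is continuous, hence measurable, and nonnegative, Tonelli's theorem allows me to compute the left-hand side of \eqref{par} as the iterated integral over $\eta\in\X^*$ at fixed $y\in\X$, followed by the integral over $y$.

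For the inner integral I would use the explicit form \eqref{tro} of $W_\hb$: after the substitution $x\mapsto x+y$, the function $\eta\mapsto\langle v_\hb(y,\eta),\tilde u_y\rangle$ equals $e^{-\frac{i}{2\hb}y\cdot\eta}$ times the Fourier transform, evaluated at $\eta/\hb$, of $k_y(x):=\overline{v_\hb(x)}\,\tilde u_y(x+y)$. Being a product of two elements of $L^2(\X)$, $k_y$ lies in $L^1(\X)$, and Plancherel's theorem — used as an identity in $[0,+\infty]$, which stays valid for $L^1$ functions even when $k_y\notin L^2(\X)$ — together with the rescaling $\eta\mapsto\eta/\hb$ that accounts for the normalisation $(2\pi\hb)^{-N}$, yields $\int_{\X^*}(2\pi\hb)^{-N}\,\d\eta\,\big|\langle v_\hb(y,\eta),\tilde u_y\rangle\big|^2=\|k_y\|_{L^2(\X)}^2=\int_\X|v_\hb(x)|^2\,|u(x+y)|^2\,\d x$, where the vector potential has disappeared precisely because $|\tilde u_y|=|u|$.

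It then remains to integrate this over $y$: by Tonelli's theorem once more and the translation invariance of Lebesgue measure, $\int_\X\d y\int_\X|v_\hb(x)|^2\,|u(x+y)|^2\,\d x=\int_\X|v_\hb(x)|^2\Big(\int_\X|u(x+y)|^2\,\d y\Big)\d x=\|v_\hb\|^2\,\|u\|^2=1$, since $v_\hb$ and $u$ are unit vectors. I expect the momentum integration to be the only real obstacle: it is the rigorous form of the heuristic identity $\int_{\X^*}e^{-\frac{i}{\hb}(x-x')\cdot\eta}\,\d\eta=(2\pi\hb)^N\delta(x-x')$, and the care required is exactly that $k_y$ need not be square-integrable for a fixed $y$, so that one should either argue in $[0,+\infty]$ as above — finiteness being recovered only after the $y$-integration — or observe that the $\eta$-integral depends on $\tilde u_y$ through its modulus alone and invoke the known non-magnetic resolution of identity.
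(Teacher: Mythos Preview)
Your argument is correct and follows essentially the same route as the paper: both compute $\langle v^A_\hb(Y),u\rangle$ explicitly, recognise it as a partial Fourier transform of a product of $L^2$ functions twisted by a unimodular phase, and conclude via Plancherel. The only difference is packaging: the paper writes the map $v\otimes u\mapsto\langle v^A_\hb(\cdot),u\rangle$ as a composition of three unitaries on $L^2(\X\times\X)$ (multiplication by a unimodular $\beta^A_\hb$, a change of variables $C$, and $1\otimes\F$), which sidesteps the $[0,+\infty]$ subtlety you handle by iterated integration.
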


\begin{proof}
One has to show that
$$
\int_\Xi\frac{\d Y}{(2\pi\hb)^N}\big|\langle v^A_{\hb}(Y),\,u\rangle\big|^2=
\| v^A_{\hb}\|^2\;\!\|u\|^2,
$$
which follows if the mapping
\begin{equation}\label{mar}
\H\otimes\H\ni v\otimes u\mapsto \big\langle e^{\frac{i}{\hb}\;\!\Gamma^A[\cdot,Q]}\;\!
W_\hb(-\cdot/\hb)
\;\!v,u\big\rangle\in L^2\Big(\Xi;\frac{\d Y}{(2\pi\hb)^N}\Big)
\end{equation}
is proved to be isometric.

By using \eqref{mur} and after a simple change of variables, one gets:
\begin{eqnarray*}
\left\langle \bar \gamma^A_\hb(y,Q)\;\!W_\hb (-Y/\hb)\;\!v,u \right\rangle
&=& \int_\X \d x\,e^{-\frac{i}{\hb} x\cdot\eta}\;\!{\textstyle \gamma^A_\hb
(y,x+y/2)\;\!u(x+y/2)\;\!\overline{v}(x-y/2)} \\
&=& (2\pi)^{N/2}\,[(1\otimes\F)\circ C]\,[\beta^A_\hb \cdot(u\otimes\overline{v})]{\textstyle \left(y,\,\frac{\eta}{\hb}\right)},
\end{eqnarray*}
where $1\otimes\mathcal F$ is a unitary partial Fourier transformation, $C$ is the operator "change of variables"
$$
{\textstyle
(C F)(y,x):=F\left(x+\frac{y}{2},\,x-\frac{y}{2}\right)},
$$
which is also unitary in $L^2(\X\times\X)\simeq\mathcal H\otimes\mathcal H$ and
$\beta^A_\hb(x,y)=\gamma^A_\hb(x-y,x)$.
From this, and since $|\beta^A_\hb(x,y)|=1$ for all $x,y\in\X$, the isometry of
\eqref{mar} follows immediately.
\end{proof}

\begin{proposition}\label{probtranz}
Assume that the magnetic field $B$ is continuous and let $v$ be a unit vector in $\H$.
For any $Y,\,Z\in \Xi$, one has
$$
\lim_{\hb\rightarrow 0}\big|\langle v^A_{\hb} (Z),\,v^A_{\hb} (Y)\rangle\big|^2=\delta_{ZY}.
$$
\end{proposition}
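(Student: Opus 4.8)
The plan is to evaluate the overlap $\langle v^A_\hbar(Z),v^A_\hbar(Y)\rangle$ directly from the explicit expression \eqref{mur}, to rescale the integration variable so as to isolate the $\hbar$-dependence, and then to invoke two elementary facts about $L^2(\X)$: that the translates of an $L^2$-function tend weakly to zero at infinity, and the Riemann--Lebesgue lemma.

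First, writing $Z=(z,\zeta)$ and $Y=(y,\eta)$ and using \eqref{mur}, one gets
$$
\langle v^A_\hbar(Z),v^A_\hbar(Y)\rangle=\int_\X\d x\; e^{\frac{i}{\hbar}\Phi(x)}\;\overline{v_\hbar(x-z)}\;v_\hbar(x-y),
$$
where $\Phi(x):=(x-\tfrac{y}{2})\cdot\eta-(x-\tfrac{z}{2})\cdot\zeta+\Gamma^A[y,x]-\Gamma^A[z,x]$ is real-valued, so that only its unimodular exponential matters. The change of variables $x=z+\sqrt{\hbar}\,w$, together with $v_\hbar(x-z)=\hbar^{-N/4}v(w)$ and $v_\hbar(x-y)=\hbar^{-N/4}v\!\big(w+\tfrac{z-y}{\sqrt{\hbar}}\big)$, makes the Jacobian $\hbar^{N/2}$ cancel the normalisation prefactor, and leaves
$$
\langle v^A_\hbar(Z),v^A_\hbar(Y)\rangle=\int_\X\d w\; e^{\frac{i}{\hbar}\Phi(z+\sqrt{\hbar}w)}\;\overline{v(w)}\;v\!\Big(w+\tfrac{z-y}{\sqrt{\hbar}}\Big).
$$

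Now I would distinguish three cases. If $z\neq y$, the phase being unimodular gives the bound $|\langle v^A_\hbar(Z),v^A_\hbar(Y)\rangle|\le\int_\X|v(w)|\,|v(w+a_\hbar)|\,\d w$ with $a_\hbar:=(z-y)/\sqrt{\hbar}$; since $|a_\hbar|\to\infty$ and $|v|\in L^2(\X)$, approximating $|v|$ by a compactly supported function (whose translates eventually have disjoint supports) shows that this tends to $0$. Here neither $\zeta,\eta$ nor the magnetic phase play any role. If $z=y$ but $\zeta\neq\eta$, the two circulations in $\Phi$ cancel and $\Phi(z+\sqrt{\hbar}w)=\tfrac12\, z\cdot(\eta-\zeta)+\sqrt{\hbar}\,w\cdot(\eta-\zeta)$, so that
$$
\big|\langle v^A_\hbar(Z),v^A_\hbar(Y)\rangle\big|=\Big|\int_\X\d w\;|v(w)|^2\,e^{\frac{i}{\sqrt{\hbar}}w\cdot(\eta-\zeta)}\Big|,
$$
which, since $|v|^2\in L^1(\X)$ and $(\eta-\zeta)/\sqrt{\hbar}\to\infty$, tends to $0$ by Riemann--Lebesgue. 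Finally, if $Z=Y$ the overlap equals $\|v^A_\hbar(Z)\|^2=\|v\|^2=1$, because the map $v\mapsto v_\hbar$ is norm-preserving and $W_\hbar(-Z/\hbar)$, as well as multiplication by $e^{\frac{i}{\hbar}\Gamma^A[z,Q]}$, are unitary. The three cases together yield $\lim_{\hbar\to0}|\langle v^A_\hbar(Z),v^A_\hbar(Y)\rangle|^2=\delta_{ZY}$.

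I do not expect any serious obstacle: the computation is short and the magnetic contribution is harmless, entering only through a unimodular phase that is discarded by the modulus estimate when $z\neq y$ and cancels identically when $z=y$. The only points requiring a little care are the organisation of the case distinction and the two standard $\hbar\to0$ limiting arguments, namely the weak vanishing of translates and the Riemann--Lebesgue lemma.
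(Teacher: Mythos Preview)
Your proof is correct and follows essentially the same route as the paper's: the same change of variables $x=z+\sqrt{\hbar}\,w$, the same case distinction on $z\ne y$ versus $z=y,\ \zeta\ne\eta$, and the same two limiting arguments (approximation by compactly supported functions for the first, Riemann--Lebesgue for the second). The only cosmetic difference is that the paper combines the two circulations $\Gamma^A$ into a flux $\Gamma^B$ via Stokes before taking the modulus, but since the phase is unimodular and is discarded anyway in the estimate, this plays no role.
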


\begin{proof}
Since the case $Z=Y$ is trivial, we can assume that $Z\neq Y$.
A short computation shows that the expression
$\langle v^A_{\hb} (Z),\,v^A_{\hb} (Y)\rangle$ is equal to
$$
e^{\frac{i}{2\hb}(z\cdot\zeta-y\cdot\eta)}\;\!\gamma^A_\hb(z,y)
\int_\X \d x\;\!e^{\frac{i}{\hb}x\cdot(\eta-\zeta)}\;\!
e^{-\frac{i}{\hb}\Gamma^B\langle x,y,z\rangle}\;\!
\hb^{-N/2}\;\!{\textstyle \overline{v\left(\frac{x-z}{\sqrt{\hb}}\right)}\;\!
 v\left(\frac{x-y}{\sqrt{\hb}}\right)}.
$$
After the change of variables $\frac{x-z}{\sqrt{\hb}}\,=\,t$, one gets
\begin{equation}\label{por}
\big|\langle v^A_{\hb} (Z), v^A_{\hb} (Y)\rangle\big|
= \left| \int_\X \d t\;\!
e^{it\cdot\frac{\eta-\zeta}{\sqrt{\hb}}} \;\!e^{-\frac{i}{\hb}
\Gamma^B\langle z+\sqrt{\hb}t,y,z\rangle}\;\!
\overline{v(t)}\;\!{\textstyle v\left(t+\frac{z-y}{\sqrt{\hb}}\right)}\right|.
\end{equation}

Now, if $z\neq y$, the r.h.s.~of \eqref{por} is dominated by $\int_\X \d t \;\!|v(t)|\;\! |v\left(t+\frac{z-y}{\sqrt{\hb}}\right)|$. It is easily shown that this one converges to $0$ as $\hb\rightarrow 0$ by a simple approximation argument using functions with compact support.
On the other hand, if $z=y$ but $\eta\neq\zeta$, then the r.h.s.~of \eqref{por} is equal to
$$
\left|\int_\X \d t\;\! e^{it\cdot\frac{\eta-\zeta}{\sqrt{\hb}}}|v(t)|^2\right|
=(2\pi)^{N/2}\left|(\F^* |v|^2){\textstyle \left(
\frac{\eta-\zeta}{\sqrt{\hb}}\right)}\right|
$$
which converges to $0$ as $\hb\rightarrow 0$ by the Riemann-Lebesgue Lemma.
\end{proof}

The next property of the magnetic coherent states is an important one, having consequences
on the behavior of the magnetic Berezin quantization for $\hb\rightarrow 0$.
Unfortunately, some extra assumptions on the magnetic field $B$ and on the state $v$ will be needed.
On the other hand, notice that the result is valid
for any bounded continuous function $g$; no compact support assumption is required (see \cite[Sec.~II.1.3]{La3}).

\begin{proposition}\label{debaza}
Let $B_{jk}\in BC^\infty(\X)$ for $j,k\in \{1,\dots,N\}$ and assume that $v\in\S(\X)$.
For any $g:\Xi\rightarrow\C$ bounded continuous function and any $Z\in \Xi$ one has
$$
\lim_{\hb\rightarrow 0}\int_\Xi\frac{\d Y}{(2\pi\hb)^N}\;\!\big|\langle v^A_{\hb}(Z),
v^A_{\hb}(Y)\rangle\big|^2 \,g(Y)\,=\,g(Z)\ .
$$
\end{proposition}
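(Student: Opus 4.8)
The plan is to reduce the statement to an application of the dominated convergence theorem after an $\hb$-dependent rescaling of $Y$ around $Z$. Since $v^A_\hb(Z)$ is a unit vector, Proposition \ref{ax1} applied with $u=v^A_\hb(Z)$ gives $\int_\Xi(2\pi\hb)^{-N}\big|\langle v^A_\hb(Z),v^A_\hb(Y)\rangle\big|^2\,\d Y=1$, so the quantity to control equals $\int_\Xi(2\pi\hb)^{-N}\big|\langle v^A_\hb(Z),v^A_\hb(Y)\rangle\big|^2\big(g(Y)-g(Z)\big)\,\d Y$. I would then substitute $Y=Z+\sqrt\hb\,W$, with $W=(w,\omega)$, so that $(2\pi\hb)^{-N}\d Y=(2\pi)^{-N}\d W$, and use the expression for the overlap already derived in the proof of Proposition \ref{probtranz} (formula \eqref{por}, in which the substitution $t=(x-z)/\sqrt\hb$ has been carried out). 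The integral becomes $\int_\Xi(2\pi)^{-N}|F_\hb(W)|^2\big(g(Z+\sqrt\hb\,W)-g(Z)\big)\,\d W$, where
\[
F_\hb(W):=\int_\X\d t\;e^{it\cdot\omega}\,e^{-\frac{i}{\hb}\Gamma^B\langle z+\sqrt\hb\,t,\,z+\sqrt\hb\,w,\,z\rangle}\,\overline{v(t)}\,v(t-w)\,.
\]
Because $|F_\hb(W)|\le\int_\X|v(t)|\,|v(t-w)|\,\d t\le1$ and $g$ is continuous, the integrand tends to $0$ pointwise in $W$ as $\hb\to0$; since $g$ is moreover bounded, $|g(Z+\sqrt\hb W)-g(Z)|\le2\|g\|_\infty$, so it remains only to exhibit an $\hb$-independent integrable majorant for $|F_\hb(W)|^2$ and invoke dominated convergence.

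The core of the argument, and the step I expect to be the real obstacle, is to prove that $|F_\hb(W)|$ decays rapidly in both $w$ and $\omega$, uniformly for $\hb\in(0,1]$. Decay in $w$ is immediate, since $|F_\hb(W)|$ is bounded by the autocorrelation $\int_\X|v(t)|\,|v(t-w)|\,\d t$ of $|v|$, which decays rapidly because $v\in\S(\X)$. Decay in $\omega$ must be extracted by integrating by parts in $t$ against $e^{it\cdot\omega}$, and the delicate point is the $\hb^{-1}$ prefactor in the magnetic phase. However, writing the flux of the $2$-form through the small triangle as
\[
\tfrac1\hb\,\Gamma^B\langle z+\sqrt\hb\,t,z+\sqrt\hb\,w,z\rangle=\tfrac12\int_\Delta\sum_{j,k}B_{jk}\big(z+\sqrt\hb\,(u_1t+u_2w)\big)(t_jw_k-w_jt_k)\,\d u_1\,\d u_2
\]
(invariant integration of the $2$-form over the standard $2$-simplex $\Delta$), one sees that every $t$-derivative of this expression is bounded by $C_\alpha(1+|t|)\,|w|$ uniformly in $\hb\in(0,1]$: each $t$-derivative either brings down a factor $\sqrt\hb\,\partial B_{jk}$, bounded since $B_{jk}\in BC^\infty(\X)$, or differentiates the polynomial $(t_jw_k-w_jt_k)$, which is of degree one in $t$. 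Hence $\partial_t^\alpha e^{-\frac{i}{\hb}\Gamma^B\langle\cdots\rangle}$ is bounded, uniformly in $\hb$, by $C_\alpha(1+|t|)^{|\alpha|}(1+|w|)^{|\alpha|}$; combining this with the Schwartz bounds on $v$ and on $v(\cdot-w)$ (Peetre's inequality lets one convert decay in $|t-w|$ into decay in $|w|$ at the price of polynomial growth in $|t|$, absorbed by the rapid decay of $v(t)$) yields $\int_\X|\partial_t^\alpha G_\hb(t,w)|\,\d t\le C_{\alpha,L}(1+|w|)^{-L}$ for every $L$, where $G_\hb(t,w):=e^{-\frac{i}{\hb}\Gamma^B\langle\cdots\rangle}\overline{v(t)}v(t-w)$. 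Integrating by parts $|\alpha|$ times gives $|\omega^\alpha F_\hb(W)|\le\int_\X|\partial_t^\alpha G_\hb(t,w)|\,\d t$, whence $|F_\hb(W)|\le C_{k,L}(1+|\omega|)^{-k}(1+|w|)^{-L}$ for all $k,L\ge0$, uniformly in $\hb$.

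Taking $k=L=N+1$ now furnishes the majorant $|F_\hb(W)|^2\le\Psi(W):=C\,(1+|\omega|)^{-2N-2}(1+|w|)^{-2N-2}$ with $\Psi\in L^1(\Xi)$; together with $|g(Z+\sqrt\hb W)-g(Z)|\le2\|g\|_\infty$ this provides the $\hb$-independent integrable bound $2\|g\|_\infty\Psi(W)$ for the integrand, which moreover tends to $0$ pointwise as $\hb\to0$ by continuity of $g$. Dominated convergence then yields $\int_\Xi(2\pi)^{-N}|F_\hb(W)|^2\big(g(Z+\sqrt\hb W)-g(Z)\big)\,\d W\to0$, which is the assertion. (One could note in passing, though it is not needed, that $F_\hb(W)$ converges as $\hb\to0$ to the analogous overlap $\int_\X e^{it\cdot\omega}\,e^{-\frac{i}{2}\sum_{j,k}B_{jk}(z)\,t_jw_k}\,\overline{v(t)}\,v(t-w)\,\d t$ associated with the constant magnetic field $B(z)$.)
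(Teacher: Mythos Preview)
Your proof is correct and follows essentially the same route as the paper: reduce via the normalization identity \eqref{par}, rescale $Y=Z+\sqrt\hb\,W$, and apply dominated convergence once one controls $|F_\hb(W)|$ uniformly in $\hb$ by an integrable function of $W$. The paper obtains two separate bounds $|F_{\hb,z}(x,\xi)|\le a(x)$ and $|F_{\hb,z}(x,\xi)|\le b(\xi)$ (then $|F|^2\le a\otimes b\in L^1(\Xi)$), whereas you go straight for a product bound $(1+|w|)^{-L}(1+|\omega|)^{-k}$; both rely on the same two mechanisms---the autocorrelation estimate for decay in the position variable, and integration by parts against $e^{it\cdot\omega}$ after observing that $\hb^{-1}\Gamma^B$ over the $\sqrt\hb$-scaled triangle has $t$-derivatives bounded polynomially in $(t,w)$ uniformly in $\hb$, which is exactly the content of the paper's computation with the parametrization of the flux.
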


\begin{proof}
By taking \eqref{par} into account one has to show that
\begin{equation}\label{nor}
\int_\Xi \frac{\d Y}{(2\pi\hb)^N}\big|\langle v^A_{\hb}(Z),v^A_{\hb}(Y)\rangle|^2 [g(Y)-g(Z)]
\end{equation}
converges to $0$ for $\hb\rightarrow 0$. By using the expression already obtained in \eqref{por} together with the change of variables $x = (z-y)/\sqrt{\hb}$ and $\xi = (\eta-\zeta)/\sqrt{\hb}$, one gets that \eqref{nor} is equal to
\begin{equation}\label{tur}
(2\pi)^{-N}\int_\X \int_{\X^*} \d x \;\!\d \xi\;\! \big[g\big(z-\sqrt{\hb}\;\!x,\,\zeta+\sqrt{\hb}\;\!\xi\big)-g(z,\zeta)\big] \big| F_{\hb,z}(x,\xi)\big|^2,
\end{equation}
with
\begin{equation*}
F_{\hb,z} (x,\xi):=\int_\X \d t\;\!
e^{it\cdot \xi}\;\!
e^{-\frac{i}{\hb}
\Gamma^B\langle z, z+\sqrt{\hb}\;\!t,z-\sqrt{\hb}\;\!x\rangle}
\;\!\overline{v(t)}\;\!v(t+x)\ .
\end{equation*}

We are now going to apply the Dominated Convergence Theorem to show that \eqref{tur} converges to $0$ when $\hb\rightarrow 0$. First, since $g$ is continuous one has for any fixed $(x,\xi)\in \Xi$ that
$$
\lim_{\hb \to 0 } \big[g\big(z-\sqrt{\hb}\;\!x,\zeta+\sqrt{\hb}\;\!\xi\big)-g(z,\zeta)\big] =0\ .
$$
In addition, for any $\hb\in I$ and any $(x,\xi)\in \Xi$ one has $\big|F_{\hb,z}(x,\xi)\big|^2\leq\| v\|^4$,
so the integrand of \eqref{tur} converges to $0$.

Since $g$ is also bounded, it will be enough to find functions $a\in L^1(\X;\R_{+})$, $b\in L^1(\X^*;\R_{+})$
 such that for all $\hb,x$ and $\xi$
\begin{equation}\label{sin}
|F_{\hb,z}(x,\xi)|\leq a(x)\ \ {\rm and}\ \ |F_{\hb,z}(x,\xi)|\leq b(\xi).
\end{equation}

The first estimation is simple. For arbitrary positive numbers $n$, $m$ one has
$$
|F_{\hb,z}(x,\xi)|\leq\int_{\X}\d t\;\!|v(t)|\;\!| v(t+x)|\leq C\int_{\X}\d t\;\!\langle t\rangle^{-m}\;\!\langle t+x\rangle^{-n}\leq
C'\Big[\int_{\X}\d t\;\! \langle t \rangle^{-m} \langle t\rangle^{n}\Big]\langle x\rangle^{-n}\ ,
$$
and it is then enough to choose $n>N$ and $m>N+n$.

The second inequality in \eqref{sin} is more involved. Relying on a basic property of Fourier transformation, it consists essentially in showing that the function
$$
w_\hb(t;z,x):= e^{-\frac{i}{\hb}
\Gamma^B\langle z, z+\sqrt{\hb}\;\!t,z-\sqrt{\hb}\;\!x\rangle}
\;\!\overline{v(t)}\;\!v(t+x)
$$
belongs to the Schwartz's space with respect to $t$, uniformly with respect to the parameters $z,x,\hb$.
First of all, since $v\in \S(\X)$, for any $\alpha,\beta\in\N^N$ and any $k,\,l\in\N$ one has
$$
\big|\overline{(\partial^{\alpha}v)(t)}\;\!(\partial^{\beta}v)(t+x)\big|\leq C^{\alpha,\beta}_{k,l}\;\!\langle t\rangle^{-k}\langle x\rangle^{-l}.
$$

Thus it will be enough to show that the map $e^{-\frac{i}{\hb}
\Gamma^B\langle z, z+\sqrt{\hb}\;\!\cdot,z-\sqrt{\hb}\;\!x\rangle}:\X \to \mathbb C$ is in $C^{\infty}_{\rm pol}(\X)$, and that the polynomial bounds on each of its derivatives are uniform in $\hb\in I$ and $z\in\X$, with growth at most polynomial in $x\in\X$. For this, we need to recall the natural parametrizations of the flux of the magnetic field $B$:
\begin{equation*}
\Gamma^B\langle a,b,c\rangle
= \sum_{j,k}(b_j-a_j)\;\!(c_k-b_k)
\int_0^1 \d \mu\int_0^1\d \nu\;\!\mu\;\!B_{jk}
\big(a+\mu(b-a)+\mu\nu (c-b)\big).
\end{equation*}
By using this parametrization, one obtains
$$
e^{-\frac{i}{\hb}
\Gamma^B\langle z, z+\sqrt{\hb}\;\!t,z-\sqrt{\hb}\;\!x\rangle}
=\exp\Big\{i\sum_{j,k}
t_j\;\! (x_k+t_k) \int_0^1 \d \mu\int_0^1\d \nu\;\!\mu\;\!B_{jk}
\big(z+\mu \sqrt{\hb}\;\!t - \mu\nu \sqrt{\hb}\;\!(x+t) \big)\Big\}.
$$
The needed estimates follow then quite straightforwardly from this representation and from the assumption that all the derivatives of the magnetic field are bounded.
\end{proof}

We now take into considerations the maps
$v^A_{\hb}:\left(\Xi,\sigma^B\right)\rightarrow(\H,\Sigma_\hb)$ and
$\v^A_{\hb}:\left(\Xi,\sigma^B\right)\rightarrow\left(\P(\H) ,\Sigma'_\hb\right)$
between symplectic manifolds.
We refer to \cite[Sec.~I.2.5]{La3} for a detailed presentation of the symplectic structures of $\H$ and of the projective space $\P(\H)$ and simply recall some key elements.
Note that our convention differs from that reference by a minus sign.

On the Hilbert space $\H$, the (constant) symplectic form is defined at the point $w \in \H$ by
$$
\Sigma_{\hb,w}(u,v):=-2\hb\im \langle u,v\rangle
$$
for each $u,v\in \H$ and $\hb \in I$.
For the space $\P(\H)$, recall first that each of its elements can be identified with the one-dimensional orthogonal projections $\v=|v\rangle \langle v|$ defined by some unit vector $v\in\H$, with the known phase ambiguity. Then, the Fubini-Study symplectic form $\Sigma'_\hb$ is explicitly given at the point $\v \in \P(\H)$ by
\begin{equation*}
\Sigma'_{\hb,\v}(iSv,iTv)=i\hb\,\v([S,T]) =i\hb\langle v,[S,T]v\rangle
\end{equation*}
for any self-adjoint element $S,T$ of $\B(\H)$.
This relies among others on identifying the tangent space $T_\v\P(\H)$ to the real vector space $\{iSv\mid S=S^*\in\B(\H)\}$.

We would like now to show that the pull-back by $\v^A_{\hb}$ of the form $\Sigma'_\hb$ converges to $\sigma^B$ when
$\hb\rightarrow 0$. But $\Sigma_\hb$ is already the pull-back of $\Sigma'_\hb$ by the canonical map
$:\H\rightarrow\P(\H)$, so we only need to show the next result:

\begin{proposition}\label{pulbec}
Let $B_{jk}\in BC^\infty(\X)$ for $j,k\in \{1,\dots,N\}$ and assume that $v\in\S(\X)$.
The pull-back by $v^A_{\hb}$ of $\Sigma_\hb$ converges to $\sigma^B$ in the limit
$\hb\rightarrow 0$.
\end{proposition}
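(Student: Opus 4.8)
The plan is to compute the $\hb$-dependent $2$-form $(v^A_\hb)^*\Sigma_\hb$ on $\Xi$ and let $\hb\to 0$. Since $v\in\S(\X)$ and the components of $A$ may be chosen in $C^\infty_{\rm pol}(\X)$, the map $Z\mapsto v^A_\hb(Z)\in\H$ is smooth, so the pull-back is well defined; by bilinearity it is enough to evaluate it on the coordinate vector fields $\partial_{z_j},\partial_{\zeta_k}$. Writing $\psi:=v^A_\hb(Z)$ and using $\Sigma_{\hb,\psi}(u,w)=-2\hb\,\im\langle u,w\rangle$, the task reduces to showing that, as $\hb\to 0$,
$$-2\hb\,\im\langle\partial_{z_j}\psi,\partial_{z_k}\psi\rangle\to B_{jk}(z),\qquad -2\hb\,\im\langle\partial_{z_j}\psi,\partial_{\zeta_k}\psi\rangle\to -\delta_{jk},\qquad -2\hb\,\im\langle\partial_{\zeta_j}\psi,\partial_{\zeta_k}\psi\rangle\to 0,$$
which, by \eqref{simpB}, are exactly the components of $\sigma^B$ at $Z=(z,\zeta)$.

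Differentiating \eqref{mur} gives $\partial_{\zeta_k}\psi=i\,b_k\,\psi$ with $b_k(x):=\hb^{-1}(x_k-\tfrac{z_k}{2})$ real, and $\partial_{z_j}\psi=i\,a_j\,\psi+r_j$, where $a_j(x):=-\tfrac{\zeta_j}{2\hb}+\tfrac1\hb\,\partial_{z_j}\Gamma^A[z,x]$ is real and $r_j$ is the vector obtained by differentiating only the factor $v_\hb(x-z)$; the oscillating phases then cancel in $\overline{r_j(x)}\,\psi(x)=-\overline{(\partial_jv_\hb)(x-z)}\,v_\hb(x-z)$ and in $\overline{\psi(x)}\,r_k(x)=-\overline{v_\hb(x-z)}\,(\partial_kv_\hb)(x-z)$. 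Since $b_j,b_k$ are real, $\langle\partial_{\zeta_j}\psi,\partial_{\zeta_k}\psi\rangle=\int b_jb_k|\psi|^2$ is real, so the $d\zeta\wedge d\zeta$ component vanishes identically. For the mixed component, $\int a_jb_k|\psi|^2$ is again real, hence $\im\langle\partial_{z_j}\psi,\partial_{\zeta_k}\psi\rangle=\re\!\int b_k\,\overline{r_j}\,\psi\,\d x$; after the substitution $x=z+\sqrt\hb\,t$ this becomes a linear combination of $\int\overline{(\partial_jv)(t)}\,v(t)\,\d t$, which is purely imaginary, and $\int t_k\,\overline{(\partial_jv)(t)}\,v(t)\,\d t$, whose real part is $-\tfrac12\delta_{jk}$ — both facts by an elementary integration by parts using $\|v\|=1$ — so that $-2\hb\,\im\langle\partial_{z_j}\psi,\partial_{\zeta_k}\psi\rangle=-\delta_{jk}$ already for every $\hb\in I$, with no limiting argument.

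The substantial part is the $dz\wedge dz$ component. Expanding $\langle\partial_{z_j}\psi,\partial_{z_k}\psi\rangle$ in the four terms coming from $i a_j\psi+r_j$ and $i a_k\psi+r_k$: the $a_ja_k$ term is real; the term $\langle r_j,r_k\rangle$ equals $\hb^{-1}\int\overline{(\partial_jv)(t)}(\partial_kv)(t)\,\d t$, whose imaginary part vanishes because $\int\partial_jf\,\partial_kg\,\d t=\int\partial_kf\,\partial_jg\,\d t$ for real $f,g$ (integrate by parts twice), applied to $v=f+ig$; and the imaginary parts of the two cross terms $\langle i a_j\psi,r_k\rangle$ and $\langle r_j,i a_k\psi\rangle$ combine, after an integration by parts in $x$, into $-\tfrac12\int_\X\big(\partial_{x_k}a_j-\partial_{x_j}a_k\big)(x)\,|v_\hb(x-z)|^2\,\d x$. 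From the parametrization $\Gamma^A[z,x]=\int_0^1 A\big(z+s(x-z)\big)\!\cdot\!(x-z)\,\d s$ one computes the mixed second derivative of the circulation; the contributions containing second derivatives of $A$ are symmetric in $(j,k)$ and cancel in the antisymmetrization, leaving $dA=B$ and
$$\partial_{x_k}a_j-\partial_{x_j}a_k=\frac1\hb\int_0^1 B_{jk}\big(z+s(x-z)\big)\,\d s .$$
After $x=z+\sqrt\hb\,t$ this gives $-2\hb\,\im\langle\partial_{z_j}\psi,\partial_{z_k}\psi\rangle=\int_\X\int_0^1 B_{jk}\big(z+s\sqrt\hb\,t\big)\,\d s\,|v(t)|^2\,\d t$, and since $B_{jk}\in BC^\infty(\X)$ is bounded and continuous, $|v|^2\in L^1(\X)$ and $\int_0^1\!\int_\X|v(t)|^2\,\d t\,\d s=1$, the dominated convergence theorem yields the limit $B_{jk}(z)$. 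Collecting the three components gives $(v^A_\hb)^*\Sigma_\hb\to\sigma^B$.

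I expect the delicate point to be this last component: one must notice that the terms $\hb^{-1}\im\int\overline{(\partial_jv)}(\partial_kv)\,\d t$ and, in the mixed component, $\hb^{-3/2}\re\int\overline{(\partial_jv)}\,v\,\d t$ — which individually blow up as $\hb\to 0$ — disappear once the imaginary (respectively real) part is taken, and one must perform the two-fold differentiation of the circulation together with the antisymmetrization that produces $B=dA$. Sign and inner-product-convention bookkeeping must be handled consistently with the conventions fixed just before the statement.
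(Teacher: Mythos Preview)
Your proof is correct. Both the paper and you compute $(v^A_\hb)^*\Sigma_\hb$ and pass to the limit, but the organization is genuinely different. The paper factorizes $v^A_\hb(X)$ as a product of unitary exponentials, so that $T[v^A_\hb(X)](Y)=i\,U\,M_\hb(Y;Q,D)\,v_\hb$ for a certain \emph{symmetric} first-order operator $M_\hb(Y;Q,D)$; the identity $-2\hb\,\im\langle Sv,Tv\rangle=i\hb\,\langle v,[S,T]v\rangle$ for symmetric $S,T$ then turns the whole computation into a single commutator $[M_\hb(Y;\sqrt\hb\,Q,\hb^{-1/2}D),\,M_\hb(Z;\sqrt\hb\,Q,\hb^{-1/2}D)]$, and it is from the canonical relations $[Q_j,D_k]=i\delta_{jk}$ that both $\sigma(Y,Z)$ and the magnetic term $\sum_{j,k}y_jz_k\int_0^1 B_{jk}(x+s\sqrt\hb\,Q)\,\d s$ emerge at once (together with an $O(\sqrt\hb)$ remainder involving $\partial^2 A$). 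You instead differentiate the explicit formula \eqref{mur} in coordinates, split each derivative into a real-phase part $ia_j\psi$ and an amplitude part $r_j$, and handle the three blocks $\d\zeta\wedge\d\zeta$, $\d z\wedge\d\zeta$, $\d z\wedge\d z$ separately via elementary integration by parts; the antisymmetrization $\partial_{x_k}a_j-\partial_{x_j}a_k$ that produces $B=\d A$ plays for you the role that the commutator plays in the paper. Your approach is more elementary and makes the vanishing of the $\d\zeta\wedge\d\zeta$ block and the \emph{exact} (not just asymptotic) value $-\delta_{jk}$ of the mixed block transparent; the paper's commutator packaging is more structural and gets all components in one stroke, at the price of a ``rather lengthy'' operator calculation that it does not spell out.
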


\begin{proof}
One has to calculate, for any $X,Y,Z\in\Xi$, the expression
\begin{equation*}
\Sigma_{\hb,v^A_{\hb}(X)}\left(T[v^A_{\hb}(X)](Y),T[v^A_{\hb}(X)](Z)\right)
=-2\hb\im\big\langle T[v^A_{\hb}(X)](Y),T[v^A_{\hb}(X)](Z)\big\rangle
\end{equation*}
where $T$ denotes the tangent map (total derivative). For that purpose, let us recall that
$$
v^A_{\hb}(X)=e^{\frac{i}{\hb}\Gamma^A[x,Q]}\;\!W_\hb(-X/\hb)\;\!v_{\hb}
= e^{\frac{i}{\hb}(Q-\frac{x}{2})\cdot \xi }\;\!e^{\frac{i}{\hb}\Gamma^A[x,Q]}\;\!e^{-ix\cdot D}\;\!v_\hb\ .
$$
It then follows that
\begin{equation*}
T[v^A_{\hb}(X)](Y) = i\;\!e^{\frac{i}{\hb}(Q-\frac{x}{2})\cdot \xi }\;\!e^{\frac{i}{\hb}\Gamma^A[x,Q]}\;\!e^{-ix\cdot D}\;\!M_\hb(Y;Q,D)\;\! v_\hb
\end{equation*}
with (obvious formal notations)
\begin{equation*}
M_\hb(Y;Q,D):={\textstyle \frac{1}{\hb}\left[
-\frac{y}{2}\cdot \xi + \left(Q+\frac{x}{2}\right)\cdot \eta
+\frac{\partial \Gamma^A[x,q]}{\partial x}\big|_{q=Q+x}(Y)
\right]-y\cdot D}
\end{equation*}
and with
\begin{equation*}
\frac{\partial \Gamma^A[x,q]}{\partial x}\big|_{q=Q+x}(Y) =
-y\cdot \int_0^1\d s\;\!A\big(x+s\;\!Q\big)
+\sum_{j,k}Q_j\;\!y_k\int_0^1\d s\;\!(1-s)\;\!\partial_k A_j\big(x+s\;\!Q\big)\ .
\end{equation*}
A similar expression holds for $T\big[v^A_{\hb}(X)\big](Z)$.
Thus, one has to calculate
\begin{eqnarray}\label{yes}
\nonumber &&-2\hb\im\big\langle T\big[v^A_{\hb}(X)\big](Y),T\big[v^A_{\hb}(X)\big](Z) \big\rangle \\
\nonumber &=&-2\hb\im\big\langle
M_\hb(Y;Q,D)\;\! v_\hb,M_\hb(Z;Q,D)\;\! v_\hb
\big\rangle \\
\nonumber &=&-2\hb\im\left\langle
{\textstyle M_\hb\left(Y;\sqrt{\hb}\;\!Q,\frac{1}{\sqrt{\hb}}D\right)\;\! v,
M_\hb\left(Z;\sqrt{\hb}\;\!Q,\frac{1}{\sqrt{\hb}}D\right)\;\! v
}\right\rangle \\
&=&i\hb\left\langle
{\textstyle v, \left[M_\hb\left(Y;\sqrt{\hb}\;\!Q,\frac{1}{\sqrt{\hb}}D\right),
M_\hb\left(Z;\sqrt{\hb}\;\!Q,\frac{1}{\sqrt{\hb}}D\right)\right]\;\! v
}\right\rangle \ .
\end{eqnarray}
A rather lengthy calculation then gives
\begin{eqnarray*}
&&{\textstyle
\left[M_\hb\left(Y;\sqrt{\hb}\;\!Q,\frac{1}{\sqrt{\hb}}D\right),
M_\hb\left(Z;\sqrt{\hb}\;\!Q,\frac{1}{\sqrt{\hb}}D\right)\right] }\\
&=& -\frac{i}{\hb}\;\!\sigma(Y,Z) -\frac{i}{\hb} \sum_{j,k}y_j\;\!z_k\;\! \int_0^1\d s\;\!s\;\!
B_{jk}(x+s\;\!\sqrt{\hb}\;\!Q) \\
&& -\frac{i}{\hb} \sum_{j,k}y_j\;\!z_k\;\! \int_0^1\d s\;\!(1-s)\;\!
B_{jk}(x+s\;\!\sqrt{\hb}\;\!Q) \\
&&+\frac{i}{\hb}\;\!\sqrt{\hb}\sum_{j,k,\ell} (y_\ell \;\!z_k-y_k\;\!z_\ell)\;\!Q_j
\int_0^1 \d s \;\!s\;\!(1-s)\;\! \partial^2_{\ell k}A_j(x+s\;\!\sqrt{\hb}\;\!Q)\ .
\end{eqnarray*}
Finally, by inserting these expression into \eqref{yes} and by an application of the Dominated Convergence Theorem, one obtains the statement of the Proposition.
\end{proof}

The following result can be interpreted as the convergence of the quantum pure state $\v^B_\hb(Z)$
to a corresponding classical pure states in the semiclassical limit.
For that purpose, we shall consider functions $g:\overline I\times\Xi$ and write $g_\hb(X)$ for $g(\hb,X)$.
We assume that $\hb\mapsto g_\hb(X)$ is continuous for any $X\in\Xi$ and that $g_\hb\in\S(\Xi)$ for all $\hb$.

\begin{proposition}\label{ultima}
Let $B_{jk}\in BC^\infty(\X)$ for $j,k\in \{1,\dots,N\}$ and assume that
$v \in \S(\X)$. Then for any $g$ as above and any $Z\in\Xi$ one has
\begin{equation}\label{ultimita}
\underset{\hb\rightarrow 0}{\lim}\left[\v^B_{\hb} (Z)\right](g_\hb)=\delta_{Z}(g_0)=g_0(Z)\ .
\end{equation}
\end{proposition}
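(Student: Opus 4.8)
The plan is to start from the third (intrinsic) expression for $\left[\v^B_\hb(Z)\right](f)$ in \eqref{insfarsit}, evaluated at $f=g_\hb$. One cannot apply the Dominated Convergence Theorem to that triple integral as it stands, because $g_\hb\big(z+\tfrac{\sqrt\hb}2(x+y),\,\zeta+\sqrt\hb\,\eta\big)$ delocalizes in $\eta$ as $\hb\to 0$ and no $\hb$-uniform integrable majorant in $(x,y,\eta)$ is available. The first step is therefore to integrate out $\eta$ exactly: the substitution $p=\zeta+\sqrt\hb\,\eta$ identifies the $\eta$-integral with a partial Fourier transform of $g_\hb$ in its momentum variable, and the further change of variables $x=y-\sqrt\hb\,s$ cancels the resulting Jacobian and yields the absolutely convergent form
\begin{equation*}
\left[\v^B_\hb(Z)\right](g_\hb)=\int_\X\d y\int_\X\d s\;e^{is\cdot\zeta}\,\big(\F_2 g_\hb\big)\big(z+\sqrt\hb\,y-\tfrac\hb2 s,\,s\big)\;w_\hb\big(y-\sqrt\hb s,\,y;\,z\big)\;\overline{v(y-\sqrt\hb s)}\;v(y),
\end{equation*}
where $\F_2 g_\hb$ is the partial Fourier transform of $g_\hb$ with respect to the second variable (normalized so that Fourier inversion holds) and $w_\hb(x,y;z):=e^{-\frac i\hb\Gamma^B\langle z,\,z+\sqrt\hb x,\,z+\sqrt\hb y\rangle}$.

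The second step is to let $\hb\to 0$ in this double integral by dominated convergence. Using the parametrization of $\Gamma^B$ recalled in the proof of Proposition \ref{debaza}, one sees that $\frac1\hb\Gamma^B\langle z,\,z+\sqrt\hb(y-\sqrt\hb s),\,z+\sqrt\hb y\rangle$ carries an overall factor $\sqrt\hb$, hence tends to $0$ pointwise, so $w_\hb(y-\sqrt\hb s,y;z)\to 1$ while $|w_\hb|\equiv 1$; moreover $\overline{v(y-\sqrt\hb s)}\,v(y)\to|v(y)|^2$ by continuity of $v$, and $\big(\F_2 g_\hb\big)\big(z+\sqrt\hb y-\tfrac\hb2 s,s\big)\to\big(\F_2 g_0\big)(z,s)$ by the continuity assumption on the family $\{g_\hb\}$ (together with the mean value theorem to absorb the vanishing shift of the first argument). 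For the majorant, $v\in\S(\X)$ gives $\big|\overline{v(y-\sqrt\hb s)}\,v(y)\big|\le\|v\|_\infty\,C_M\,\langle y\rangle^{-M}$ uniformly in $\hb$ and $s$, while the Schwartz bounds on $g_\hb$, uniform for $\hb$ near $0$, give $\big|\big(\F_2 g_\hb\big)(q,s)\big|\le C\,\langle s\rangle^{-M'}$; choosing $M,M'>N$ makes the product integrable on $\Xi$ uniformly in $\hb$. Hence the limit equals $\int_\X\d y\,|v(y)|^2\int_\X\d s\,\big(\F_2 g_0\big)(z,s)\,e^{is\cdot\zeta}=\|v\|^2\,g_0(z,\zeta)=g_0(Z)$ by Fourier inversion and $\|v\|=1$, which is \eqref{ultimita}.

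The main obstacle is the first step: one has to resist applying any limiting theorem before the $\eta$-integration has been performed, and to track carefully the Jacobians through the two successive changes of variables. Once the clean double integral is in hand the convergence is routine, the only point needing attention being the $\hb$-uniformity of the Schwartz majorant for $g_\hb$, which is exactly what the continuity hypothesis on $\hb\mapsto g_\hb$ provides.
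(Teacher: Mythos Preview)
Your proof is correct and follows essentially the same route as the paper's: both start from the last line of \eqref{insfarsit}, integrate out the momentum variable to produce a partial inverse Fourier transform of $g_\hb$, perform a $\sqrt{\hb}$-scaled change in the remaining spatial variables so that the flux $\frac{1}{\hb}\Gamma^B\langle z,z+\sqrt\hb\cdot,z+\sqrt\hb\cdot\rangle$ acquires an overall factor $\sqrt\hb$ and vanishes, and then conclude by dominated convergence (the paper does this in two nested stages, you do it in one pass on the double integral, which is only a cosmetic difference). One small caveat: your claim that the pointwise continuity hypothesis on $\hb\mapsto g_\hb(X)$ \emph{provides} uniform-in-$\hb$ Schwartz bounds is not literally true as stated in the paper's assumptions; the paper's own proof tacitly uses the same uniformity, so this is a shared looseness in the hypothesis rather than a defect specific to your argument.
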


\begin{proof}
Let $\hb \in I$.
Starting from the last expression obtained in \eqref{insfarsit} and performing the change of variables $y' = x-y$, one finds that $\left[\v^B_{\hb}(Z)\right](g_\hb)$ is equal to
\begin{equation}\label{futurD}
\frac{1}{(2\pi)^{N}}\int_\X\int_\X\int_{\X^*}\d x\,\d y\,\d\eta\,e^{iy\cdot\eta}\;\!
g_\hb\big(z+\sqrt{\hb}\;\!(x-y/2),\zeta+\sqrt{\hb}\;\!\eta\big)
\;\!\varphi_\hb(z;x,y)\;\!v(x-y)\;\!\overline{v(x)}
\end{equation}
with $\varphi_\hb(z;x,y):=
e^{-\frac{i}{\hb}\Gamma^B\langle z,z+\sqrt{\hb}\;\!x,z+\sqrt{\hb}\;\!(x-y)\rangle}$.
The expression \eqref{futurD} can be rewritten as
\begin{equation}\label{futurD2}
\frac{1}{(2\pi)^{N/2}}\int_\X\d x \int_\X \frac{\d y}{\hb^{N/2}}
\; G_\hb\Big(x,y,\frac{y}{\sqrt{\hb}}\Big) \;\!\varphi_\hb(z;x,y)
\;\!v(x-y)\;\!\overline{v(x)},
\end{equation}
where
\begin{equation*}
G_\hb(x,y,\y):=  (1\otimes\F^*) [\Theta_1(z,\zeta)g_\hb]\big(\sqrt{\hb}\;\!(x-y/2),\y\big)
\end{equation*}
and $\Theta(z,\zeta)g_\hb=g_\hb(\cdot+y,\cdot+\zeta)$.
The map $\X \ni \y \mapsto G_\hb(x,y,\y) \in \C$ clearly belongs to $L^1(\X)$.

Now, in order to have a better understanding of the expression \eqref{futurD2}, let us observe that
\begin{equation*}
\varphi_\hb(z;x,y)= \exp\Big\{i\sum_{j,k}x_j\;\! y_k
 \int_0^1 \d \mu\int_0^1\d \nu\;\!\mu\;\!B_{jk}
\big(z+ \mu \sqrt{\hb}\;\!x  - \mu\nu \sqrt{\hb} \;\!y
\big)\Big\}\ .
\end{equation*}
Clearly, this functions has a limit as $\hb \to 0$.
More precisely, one has
\begin{equation*}
\lim_{\hb \to 0}\varphi_\hb(z;x,y) =
\exp\Big\{{\textstyle \frac{i}{2}}\sum_{j,k} y_k\;\! z_j\;\! B_{jk}(z)\Big\}
\end{equation*}
but one also has $\lim_{\hb \to 0}\varphi_\hb(z;x,\sqrt{\hb}\;\!y) =1$, both convergences being locally uniform.

By taking these information into account, one easily shows that
\begin{eqnarray}\label{futurD3}
&&\frac{1}{(2\pi)^{N/2}}\int_\X \frac{\d y}{\hb^{N/2}} \;G_\hb\Big(x,y,\frac{y}{\sqrt{\hb}}\Big)\;\!\varphi_\hb(z;x,y)\;\!v(x-y) \\
\nonumber &=&\frac{1}{(2\pi)^{N/2}}\int_\X \d y \;G_\hb\big(x,\sqrt{\hb}y,y\big) \;\!\varphi_\hb\Big(z;x,\sqrt{\hb}y\Big)
\;\!v\big(x-\sqrt{\hb}y\big)
\end{eqnarray}
converges as $\hb \to 0$ to
$$
(2\pi)^{-N/2}\int_\X \d y \;\!G_0(0,0,y)\;\!v(x)
= g_0\left(z,\zeta\right)v(x),
$$
locally uniformly in~$x$.

In order to conclude, one still has to show that for the expression \eqref{futurD2}
the limit $\hb \to 0$ and the integration with respect to $x$ can be exchanged.
However, this follows from the Dominated Convergence Theorem and the observation
that the expression \eqref{futurD3} is bounded in $x\in \X$ and in $\hb\in \overline{I}$.
\end{proof}

%....................................................................................................
\section{Strict quantization}\label{oaie}
%....................................................................................................

We start by recalling from \cite{MP2} the classical limit of the magnetic Weyl calculus. With the definitions and the notations introduced in Section \ref{vacoita}, and in the language of \cite{La3,Ri3,Ri4}, a
particular case of the results of \cite{MP2} states that

\begin{theorem}\label{sdq}
If $B_{jk}\in BC^\infty (\X)$ for $j,k = 1,\dots,N$, then the embeddings
$\big(\mathfrak Q^B_\hb:\S(\Xi;\R) \rightarrow \big[\ABh\big]_\R \big)_{\hb \in \bar I}$ form a strict
deformation quantization of the Poisson algebra $\big(\S(\Xi;\R),\ ,\{\cdot,\cdot\}^B\big)$.
\end{theorem}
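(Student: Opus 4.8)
The plan is to unwind Theorem~\ref{sdq} by identifying what ``strict deformation quantization'' requires and checking each clause from the results of \cite{MP2} together with the properties of $\sharp^B_\hb$ recalled in Section~\ref{vacoita}. First I would recall the definition: the family $\big(\mathfrak Q^B_\hb\big)_{\hb\in\bar I}$, where $\mathfrak Q^B_0$ is the inclusion of $\S(\Xi;\R)$ into the abelian $C^*$-algebra $\mathfrak A^B_0:=C_0(\Xi)$ and $\mathfrak Q^B_\hb(f)$ for $\hb>0$ is the self-adjoint element of $\big[\ABh\big]_\R$ represented by $\mathfrak{Op}^A_\hb(f)$, must satisfy: (i) each $\mathfrak Q^B_\hb$ is linear and, for $\hb>0$, injective with dense range in $\big[\ABh\big]_\R$, and $\mathfrak Q^B_0$ has dense range in $C_0(\Xi;\R)$; (ii) \emph{Rieffel's condition} --- for every $f\in\S(\Xi;\R)$ the map $\hb\mapsto\|\mathfrak Q^B_\hb(f)\|^B_\hb$ is continuous on $\bar I$; (iii) \emph{von Neumann / Dirac condition} --- $\big\|\mathfrak Q^B_\hb(f)\mathfrak Q^B_\hb(g)-\mathfrak Q^B_\hb(fg)\big\|^B_\hb\to 0$ and $\big\|\tfrac{i}{\hb}\big[\mathfrak Q^B_\hb(f),\mathfrak Q^B_\hb(g)\big]-\mathfrak Q^B_\hb(\{f,g\}^B)\big\|^B_\hb\to 0$ as $\hb\to 0$, for all $f,g\in\S(\Xi;\R)$.

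Next I would verify the structural points. That $\mathfrak Q^B_\hb$ lands in $\big[\ABh\big]_\R$ is exactly the statement that $\mathfrak{Op}^A_\hb(\overline f)=\mathfrak{Op}^A_\hb(f)^*$ together with $\S(\Xi)\subset\ABh$, both recorded in Section~\ref{vacoita}; injectivity follows since $\mathfrak{Op}^A_\hb$ is a faithful representation and $f\mapsto\mathfrak{Op}^A_\hb(f)$ is injective on $\S^*(\Xi)$. Density of the range uses that $\S(\Xi)$ is by construction dense in $\ABh$ (which was \emph{defined} as the completion of $\S(\Xi)$ under $\|\cdot\|^B_\hb$), and for $\hb=0$ that $\S(\Xi)$ is dense in $C_0(\Xi)$. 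The product asymptotics in (iii) reduce, via the faithful isometric representation $\mathfrak{Op}^A_\hb$, to operator-norm estimates on $\mathfrak{Op}^A_\hb(f)\mathfrak{Op}^A_\hb(g)-\mathfrak{Op}^A_\hb(fg)=\mathfrak{Op}^A_\hb\big(f\sharp^B_\hb g-fg\big)$ and similarly for the commutator; one then invokes the explicit oscillatory-integral formula \eqref{composition} for $\sharp^B_\hb$, expands the phase $e^{-\frac{i}{\hb}\Gamma^B\langle\cdots\rangle}$ using the parametrization of $\Gamma^B$ already written out in the proof of Proposition~\ref{debaza}, and performs a stationary-phase / Taylor expansion in $\hb$. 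The leading term gives $fg$, the first-order term gives precisely $\tfrac{\hb}{2i}\{f,g\}^B$ with the magnetic Poisson bracket \eqref{pison} (the $B_{jk}$ contribution arising exactly from the flux phase), and the remainder is $O(\hb^2)$ in every Schwartz seminorm, hence in $\|\cdot\|^B_\hb$ after applying the magnetic Calderón--Vaillancourt bound from \cite{IMP}; this is the content borrowed from \cite{MP2}.

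I would treat Rieffel's continuity condition (ii) as the main obstacle, as is typical for strict quantization statements. Continuity at interior points $\hb_0\in(0,1]$ follows from a uniform estimate: for $f\in\S(\Xi)$, $\big|\|\mathfrak Q^B_\hb(f)\|^B_\hb-\|\mathfrak Q^B_{\hb_0}(f)\|^B_{\hb_0}\big|\le\|\mathfrak{Op}^A_\hb(f)-\mathfrak{Op}^A_{\hb_0}(f)\|_{\mathrm{(w.r.t.\ a\ fixed\ reference)}}$ --- one rescales by the dilation $x\mapsto\sqrt{\hb}x$ (as in the proof of Proposition~\ref{debaza}) to move the $\hb$-dependence into the symbol, where Schwartz-seminorm continuity of $\hb\mapsto f(\sqrt\hb\,\cdot)$ combined with Calderón--Vaillancourt gives norm continuity. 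The delicate endpoint is $\hb\to 0$, where one must show $\|\mathfrak{Op}^A_\hb(f)\|\to\|f\|_{\infty}=\|f\|_{C_0(\Xi)}$; the upper bound $\limsup_{\hb\to0}\|\mathfrak{Op}^A_\hb(f)\|\le\|f\|_\infty$ comes again from Calderón--Vaillancourt with $\hb$-uniform constants (the $\hb$-derivatives contributing lower-order terms), while the lower bound $\liminf\ge\|f\|_\infty$ is obtained by testing $\mathfrak{Op}^A_\hb(f)$ against the coherent vectors $v^A_\hb(Z)$ of Section~\ref{oita}: by Proposition~\ref{ultima} (or \eqref{insfarsit} directly), $\big\langle v^A_\hb(Z),\mathfrak{Op}^A_\hb(f)\,v^A_\hb(Z)\big\rangle\to f(Z)$, so $\|\mathfrak{Op}^A_\hb(f)\|\ge|\langle v^A_\hb(Z),\mathfrak{Op}^A_\hb(f)v^A_\hb(Z)\rangle|\to|f(Z)|$ for each $Z$, and taking the supremum over $Z$ yields the claim. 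Putting (i)--(iii) together gives the theorem; since all the genuinely analytic work is exactly what \cite{MP2} establishes, the role of this statement is to package it in the language of \cite{La3,Ri3,Ri4}, and I would keep the proof correspondingly brief, citing \cite{MP2} for the hard estimates.
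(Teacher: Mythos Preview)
The paper does not prove Theorem~\ref{sdq}; it is explicitly introduced as a result \emph{recalled} from \cite{MP2}, and the paragraph following the statement merely unpacks the meaning of the three axioms (Rieffel, von Neumann, Dirac) without supplying any argument. You correctly recognize this in your final sentence, so your proposal is consistent with the paper's treatment: cite \cite{MP2} and move on.

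That said, your sketch goes well beyond what the paper does, and one ingredient is worth flagging. Your lower-bound argument for Rieffel's axiom at $\hb=0$ --- testing $\mathfrak{Op}^A_\hb(f)$ against the coherent vectors $v^A_\hb(Z)$ and invoking Proposition~\ref{ultima} --- is a genuinely different device from anything in \cite{MP2}, which predates the magnetic coherent states introduced in this paper. It is a valid and rather clean argument, but it is \emph{your} contribution, not something the paper (or \cite{MP2}) supplies; \cite{MP2} obtains norm continuity through the continuous-field-of-$C^*$-algebras machinery (twisted crossed products over $\overline I$), as summarized later in Section~\ref{oaie} before Proposition~\ref{zizia}. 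Your upper-bound sketch via Calder\'on--Vaillancourt is looser: that theorem bounds $\|\mathfrak{Op}^A_\hb(f)\|$ by a finite sum of sup-norms of derivatives of $f$, not by $\|f\|_\infty$ alone, so it does not directly yield $\limsup_{\hb\to 0}\|f\|^B_\hb\le\|f\|_\infty$; the actual route in \cite{MP2} again passes through the continuous-field structure (upper semicontinuity of the norm being automatic for quotient fields). If you want a self-contained argument avoiding that machinery, you would need something sharper for the upper bound --- e.g.\ combining the von Neumann axiom with a functional-calculus trick --- rather than Calder\'on--Vaillancourt by itself.
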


To explain this, observe first that $\S(\Xi;\R)$ is really a Poisson subalgebra of $C^\infty(\Xi;\R)$
for the pointwise product and the Poisson bracket $\{\cdot,\cdot\}^B$ defined in \eqref{pison}.
The embedding $\mathfrak Q^B_\hb$ just interprets $f\in \S(\Xi;\R)$ as a self-adjoint element in the
$C^*$-algebra $\ABh$. The self-adjoint part $\big[\ABh\big]_\R$ is a
Jordan-Lie-Banach algebra under the $C^*$-norm and the operations
$$
\textstyle
f{{\circ}^B_\hb}g:=\frac{1}{2}\left({f{\sharp}^B_\hb g}+{g{\sharp}^B_\hb f}\right)\quad {\rm and}\quad
[f,g]^B_\hb:=\frac{1}{i\hb}\left({f{\sharp}^B_\hb g}-{g{\sharp}^B_\hb f}\right).
$$
Then the above theorem says that the following are verified for any $f,g\in \S(\Xi;\R)$:
\begin{enumerate}
\item $\bar I\ni \hb\mapsto \| f\|^B_\hb \in \R_{+}$ is continuous (Rieffel's axiom),
\item $\| \,f{\circ^B_\hb}g-fg\,\|^B_\hb \to 0$ as $\hb \to 0$ (von Neumann's axiom),
\item $\|\,[f,g]^B_\hb-\{f,g\}^B\,\|^{B}_{\hb} \to 0 $ as $\hb\to 0$ (Dirac's axiom).
\end{enumerate}

In our framework, a pure state quantization would be a family of smooth injections
$\{\v_{\hb}:\Xi\rightarrow \P(\H)\}_{\hb\in I}$ satisfying the three axioms stated in \cite[Def.~II.1.3.3]{La3}.
In fact, by setting
\begin{equation}\label{injection}
\v_{\hb}(X):=\v^A_{\hb}(X)\equiv |v_{\hb}^A(X)\rangle \langle v_{\hb}^A(X)|
\end{equation}
for any suitable unit vector $v \in \H$ and any $X \in \Xi$, the three axioms correspond our Propositions
\ref{ax1}, \ref{debaza} and \ref{pulbec}.
Actually, it seems to us that that the content of Proposition \ref{probtranz} is more intuitive than the one of Proposition \ref{debaza}, and could replace the latter statement.

Recalling the definition \eqref{tranzactia} of the quantum transition probabilities, let us still rewrite some of the results obtained so far in this language. In that framework, Proposition \ref{ax1} reads
\begin{equation}\label{parc}
\int_{\Xi}\frac{\d Y}{(2\pi\hb)^N}\;\!p^{\rm qu}\big(\v^A_{\hb}(Y) ,\mathfrak u\big)=1
\end{equation}
for any unit vector $v \in \H$, any $\hb \in I$ and any  $\mathfrak u\in\P(\H)$.
Proposition \ref{probtranz} is then equivalent to
\begin{equation}\label{mor}
\lim_{\hb\rightarrow 0}\;\! p^{\rm qu}\big(\v^A_{\hb} (Z),\v^A_{\hb} (Y)\big)= p^{\rm cl}(Z,Y)
\end{equation}
for any $Y,Z\in\Xi$.
Finally, under the stated conditions on $B_{jk}$ and $v$, Proposition \ref{debaza} reads
\begin{equation*}
\lim_{\hb\rightarrow 0}\int_\Xi\frac{\d Y}{(2\pi\hb)^N}\;\!p^{\rm qu}\big(\v^A_{\hb}(Z),
\v^A_{\hb}(Y)\big)\;\!g(Y)=g(Z)
\end{equation*}
for any $Z\in\Xi$ and $g\in BC(\Xi)$.

Obviously, in these relations one could replace $p^{\rm qu}$ and $\v^A_{\hb}$ with $p_\bullet^{\rm qu}$ and $\v^B_{\hb}$ simply by transporting the transition probabilities on the pure state space of the intrinsic algebra $\ABh$.
By collecting these results together with Proposition \ref{pulbec}  one has obtained :

\begin{theorem}\label{themainTh}
If $B_{jk}\in BC^\infty(\X)$ and $v\in\S(\X)$, the family $\{\v^A_{\hb}\}_{\hb\in I}$ forms a pure
state quantization of the Poisson space with transition probabilities $(\Xi, \sigma^B, p^{\rm cl})$.
\end{theorem}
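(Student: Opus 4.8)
\emph{Proof proposal.} The plan is to verify, one by one, the defining conditions of a pure state quantization of a Poisson space with transition probabilities in the sense of \cite[Def.~II.1.3.3]{La3}, and to observe that the analytic substance of all of them has already been supplied: the completeness relation by Proposition \ref{ax1}, the semiclassical behaviour of the transition probabilities by Proposition \ref{debaza} (complemented by Proposition \ref{probtranz}), and the convergence of the symplectic structures by Proposition \ref{pulbec}. The only requirements not yet addressed are that each map $\v^A_\hb:\Xi\rightarrow\P(\H)$ be injective and be smooth, so I would begin with these two points.

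For injectivity, assume $\v^A_\hb(Z)=\v^A_\hb(Z')$, i.e.\ that $v^A_\hb(Z)$ and $v^A_\hb(Z')$ agree up to a phase. Taking absolute values in the explicit formula \eqref{mur} gives $|v_\hb(x-z)|=|v_\hb(x-z')|$ for almost every $x\in\X$; since $|v_\hb|^2$ is a non-zero function in $L^1(\X)$ it cannot be invariant under a non-trivial translation, so $z=z'$. The quotient of the two sides of \eqref{mur} is then a constant multiple of $e^{\frac{i}{\hb}x\cdot(\zeta-\zeta')}$, which can be constant on the (non-null) support of $v_\hb(\cdot-z)$ only if $\zeta=\zeta'$; hence $Z=Z'$. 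For smoothness the standing hypotheses come into play: as $v\in\S(\X)$, the map $z\mapsto v_\hb(\cdot-z)$ is smooth from $\X$ into $\S(\X)\subset\H$, and since $B_{jk}\in BC^\infty(\X)$ one may choose $A\in C^\infty_{\rm pol}(\X)$, so that for each fixed $(z,\zeta)$ the phases $e^{\frac{i}{\hb}(x-z/2)\cdot\zeta}$ and $e^{\frac{i}{\hb}\Gamma^A[z,x]}=\gamma^A_\hb(z,x)^{-1}$ are $C^\infty_{\rm pol}$ functions of $x$ (so that multiplication by them preserves $\S(\X)$) and depend smoothly on $(z,\zeta)$; multiplying, $Z\mapsto v^A_\hb(Z)$ is a smooth map $\Xi\rightarrow\H$, whence $\v^A_\hb$ is a smooth map $\Xi\rightarrow\P(\H)$.

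With $\Xi$ identified with its own space of pure states through $p^{\rm cl}(X,Y)=\delta_{XY}$, the remaining conditions are immediate. The completeness axiom, relative to the measure $\frac{\d Y}{(2\pi\hb)^N}$, is precisely \eqref{parc}, i.e.\ Proposition \ref{ax1}. The semiclassical compatibility of the transition probabilities, $\int_\Xi\frac{\d Y}{(2\pi\hb)^N}\,p^{\rm qu}(\v^A_\hb(Z),\v^A_\hb(Y))\,g(Y)\rightarrow g(Z)$ for every $g\in BC(\Xi)$, is Proposition \ref{debaza}, and the more transparent pointwise limit $p^{\rm qu}(\v^A_\hb(Z),\v^A_\hb(Y))\rightarrow\delta_{ZY}$, namely \eqref{mor}, is Proposition \ref{probtranz}. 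Finally, since $\Sigma_\hb$ is the pull-back of the Fubini--Study form $\Sigma'_\hb$ along the canonical projection $\H\rightarrow\P(\H)$, Proposition \ref{pulbec} states exactly that $(\v^A_\hb)^*\Sigma'_\hb\rightarrow\sigma^B$ as $\hb\rightarrow0$. Assembling these facts yields the theorem.

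I do not anticipate a genuine obstacle here, the analytic weight being entirely borne by Propositions \ref{ax1}, \ref{debaza} and \ref{pulbec}. The two spots needing a little care are the injectivity step — one must exclude an accidental translation symmetry of $|v_\hb|$, which is where $v\in L^2(\X)$ enters — and the bookkeeping, namely checking that the normalising measure $\frac{\d Y}{(2\pi\hb)^N}$ and the identifications of pure state spaces used above are literally those built into \cite[Def.~II.1.3.3]{La3}, so that the axioms are met without spurious constants.
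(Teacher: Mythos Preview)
Your proposal is correct and follows exactly the paper's approach: the theorem is stated as a summary, obtained ``by collecting these results together with Proposition \ref{pulbec}'', the three axioms of \cite[Def.~II.1.3.3]{La3} being identified with Propositions \ref{ax1}, \ref{debaza} and \ref{pulbec} (with \ref{probtranz} mentioned as an intuitive alternative). You go a bit further than the paper by spelling out the injectivity and smoothness of $\v^A_\hb$, which the paper leaves implicit in the phrase ``family of smooth injections''; your arguments for these points are sound.
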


\begin{remark}\label{ndsman}
{\rm We stress that the conditions \eqref{parc} and \eqref{mor} have been obtained for every
continuous magnetic field and for every unit vector $v$.
We also mention that the reference \cite{La3} imposes the axiom on the symplectic forms
as a limit for $\hb\rightarrow 0$, but says on page $114$ that in all the examples of this book
the equality holds without the limit. In might be interesting that we really need a limit.}
\end{remark}

We also set $\v^B_{\hb=0}(Z):=\delta_{Z}$. Let us now show that the family $\{\v^B_{\hb}(Z)\mid \hb\in \overline{I}, Z\in \Xi\}$
forms a continuous field of pure states associated with a continuous field of $C^*$-algebras, see
\cite[Sec.~II.1.2 \& II.1.3]{La3} for the abstract presentation.

We recall some results from \cite{MP2} and hinted in Theorem \ref{sdq}.
For $\hb \in I$ the $C^*$-algebra $\ABh$ is isomorphic (by a partial Fourier transform) to $\CC^B_\hb$,
the twisted crossed product algebras
$C_0(\X)\times^{\omega^B_\hb}_{\theta_\hb}\!\X$, where the group $2$-cocycle $\omega^B_\hb$ is described in \cite{MP2} and $[\theta_\hb(x)f](y)=f(y+\hb x)$ for any $x,y \in \X$ and $f \in C_0(\X)$.
Furthermore, let us consider the twisted action $\left(\Theta,\Omega^B\right)$ of $\X$ on $C_0(\overline{I}\times \X)$,
where $[\Theta(x)g](\hb,y):=g(\hb,y+\hb x)$ for all $g \in C_0(\overline{I}\times \X)$
and $[\Omega^B(x,y)](\hb,z):=\omega^B_\hb(z;x,y)$. The corresponding twisted crossed product algebra
$C_0(\overline{I}\times \X)\rtimes_\Theta^{\Omega^B}\!\!\X$ is simply denoted by $\CC^B$.
Now, it is proved in \cite[Sec.~VI]{MP2} that $\big(\CC^B,\{\CC^B_\hb,\varphi_\hb\}_{\hb \in \overline{I}}\big)$
is a continuous field of $C^*$-algebras, where $\varphi_\hb: \CC^B \to \CC^B_\hb$ is the surjective morphism
corresponding to the evaluation map $[\varphi_\hb(\Phi)](x)= \Phi(x,\hb)\in C_0(\X)$
for any $\Phi \in L^1\big(\X;C_0(\overline{I}\times \X)\big)$. By performing the partial Fourier transform,
one again obtains a continuous field of $C^*$-algebras $\big(\A^B,\{\ABh,\psi_\hb\}_{\hb \in \overline{I}}\big)$.
In this representation, the $C^*$-algebra $\A^B_0$ corresponding to $\hb = 0$ is simply equal to $C_0(\Xi)$.

\begin{proposition}\label{zizia}
Let $B_{jk}\in BC^\infty(\X)$ for $j,k\in \{1,\dots,N\}$ and assume that
$v \in \S(\X)$. Then the family $\{\v^B_{\hb}(Z)\mid \hb\in \overline{I}, Z\in \Xi\}$
forms a continuous field of pure states relative to the continuous field of $C^*$-algebras
$\big(\A^B,\{\ABh,\psi_\hb\}_{\hb \in \overline{I}}\big)$.
\end{proposition}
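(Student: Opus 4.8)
The plan is to verify the definition of a continuous field of states from \cite[Sec.~II.1.2 \& II.1.3]{La3}: one must show that each $\v^B_\hb(Z)$ is a state on $\ABh$ (resp.\ on $\A^B_0 = C_0(\Xi)$ for $\hb=0$), that each is pure, and — the real content — that for every continuous section $\Phi \in \A^B$ of the field the map $\overline I \ni \hb \mapsto \big[\v^B_\hb(Z)\big]\big(\psi_\hb(\Phi)\big)$ is continuous. Purity and the state property for $\hb>0$ are immediate from Definition \ref{enfin}, since $\v^A_\hb(Z)$ is the vector state associated with the unit vector $v^A_\hb(Z)\in\H$ and vector states on $\K(\H)$ are pure; for $\hb=0$, $\v^B_0(Z)=\delta_Z$ is evaluation at a point, hence a pure state on $C_0(\Xi)$. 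So the work is entirely in the continuity of the field of states.

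For the continuity at $\hb_0 \in I$ (interior points), I would argue that it suffices to check it on a dense subalgebra, and here the natural dense subspace is $\S(\Xi)$ sitting inside $\ABh$ via $\Op$. On $\S(\Xi)$ the expression $\big[\v^B_\hb(Z)\big](f)$ is given by the explicit integral \eqref{insfarsit}; with $B_{jk}\in BC^\infty(\X)$ and $v\in\S(\X)$, the integrand depends continuously on $\hb$ and is dominated uniformly on compact $\hb$-intervals (the flux phase has modulus one and $\overline{v(x)}v(y)$ provides Schwartz decay), so Dominated Convergence gives continuity of $\hb\mapsto \big[\v^B_\hb(Z)\big](f)$ for fixed $f\in\S(\Xi)$. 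To promote this to all of $\ABh$ one uses the uniform bound $\big|\big[\v^B_\hb(Z)\big](f)\big|\le \|f\|^B_\hb$ together with continuity of $\hb\mapsto\|f\|^B_\hb$ (Rieffel's axiom, available from Theorem \ref{sdq}) and the fact that a continuous section $\Phi$ is approximated in the field norm by sections coming from a fixed $f\in\S(\Xi)$; a standard $3\varepsilon$-argument then closes this case.

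The delicate point is continuity at $\hb=0$: one must show $\big[\v^B_\hb(Z)\big]\big(\psi_\hb(\Phi)\big)\to \delta_Z\big(\psi_0(\Phi)\big)=\big[\psi_0(\Phi)\big](Z)$ as $\hb\to 0$, where $\psi_0(\Phi)\in C_0(\Xi)$. Here is where Proposition \ref{ultima} does the essential work: for a continuous section $\Phi$ of $\A^B$, the symbols $f_\hb := \psi_\hb(\Phi)$ form a family with $\hb\mapsto f_\hb(X)$ continuous and $f_0 = \psi_0(\Phi)\in C_0(\Xi)$, which is exactly (a $C_0$-version of) the type of family $g_\hb$ covered there, and \eqref{ultimita} gives $\lim_{\hb\to0}\big[\v^B_\hb(Z)\big](f_\hb)=f_0(Z)$. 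The main obstacle I anticipate is bridging the gap between the hypotheses of Proposition \ref{ultima} (there $g_\hb\in\S(\Xi)$) and a general continuous section, whose evaluations $\psi_\hb(\Phi)$ need only lie in $\ABh$, not in $\S(\Xi)$; one handles this by a density/approximation argument — approximating $\Phi$ uniformly over $\overline I$ by sections with Schwartz-class evaluations — again controlled via the uniform bound $\big|\big[\v^B_\hb(Z)\big](f)\big|\le\|f\|^B_\hb$ and the continuity of $\hb\mapsto\|f\|^B_\hb$ up to $\hb=0$. Assembling the interior continuity, the continuity at $0$, purity, and the state property yields the claim.
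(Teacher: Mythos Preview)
Your treatment of the continuity condition is essentially the paper's own argument: check continuity of $\hb\mapsto[\v^B_\hb(Z)](g_\hb)$ on Schwartz sections via the explicit formula \eqref{insfarsit} for interior points and via Proposition~\ref{ultima} at $\hb=0$, then pass to general sections by density and the uniform bound coming from the states being of norm one. That part is fine.

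However, there is a genuine gap. Landsman's Definition~II.1.3.1 in \cite{La3} has \emph{two} conditions, and you only address the first. The second requires that for every $\hb\in\overline I$ one has $\bigcap_{Z\in\Xi}\ker\big[\pi^B_\hb(Z)\big]=\{0\}$, where $\pi^B_\hb(Z)$ denotes the GNS representation associated with the pure state $\v^B_\hb(Z)$. This is not subsumed by purity or by continuity: it is a joint faithfulness condition on the whole family of states at each fixed $\hb$. In the paper's proof this is handled separately: for $\hb\in I$, the GNS representation of any pure state on $\ABh\cong\K(\H)$ is irreducible, and every irreducible representation of $\K(\H)$ is unitarily equivalent to the identity representation, which is faithful, so each single kernel is already $\{0\}$; for $\hb=0$, point evaluations $\delta_Z$ on $C_0(\Xi)$ obviously separate points, so the intersection of their GNS kernels is again trivial. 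You need to add this verification to complete the proof.
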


\begin{proof}
Since $\v^B_{\hb}(Z)$ is a pure state on $\ABh$ for each $Z \in \X$ and $\hb \in I$
and since $\delta_{Z}$ is a pure state on $\A^B_0 \equiv C_0(\Xi)$,
the proof simply consists in verifying that the two conditions stated in \cite[Def.~II.1.3.1]{La3}
are satisfied.

The first condition stipulates that for any $g \in \A^B$, the map $\hb \mapsto
[\v^B_{\hb}(Z)]\big(\psi_\hb(g)\big)$ belongs to $C(\overline{I})$.
If $g_\hb\in\S(\Xi),\,\forall \hb$, then the continuity in $\hb \in I$
follows from the explicit formula \eqref{insfarsit} and the continuity at $\hb = 0$
has been proved in Proposition \ref{ultima}.
The general case follows then by density and approximation.

The second condition requires that  $\cap_{Z\in \Xi}\;\!{\rm{ker}}[\pi^B_{\hb}(Z)]=\{0\}$ for any
$\hb\in \overline{I}$, where $\pi^B_{\hb}(Z)$ is the GNS representation associated with
$\v^B_{\hb}(Z)$. The GNS representation of a
pure state is irreducible, every irreducible representation of $\ABh\cong \K(\H)$
is unitarily equivalent to the identity representation, which is faithful.
So the mentioned condition holds for $\hb \in I$.
For $\hb =0$, this condition is also clearly satisfied.
\end{proof}

\begin{remark}\label{valdman}
{\rm According to the terminology of \cite{KNW}, the magnetic Weyl quantization is {\it positive}, meaning that any pure state $\delta_Z$ of the classical $C^*$-algebra $C_0(\Xi)$ can be deformed for each $\hb$ to a pure state $\v^B_{\hb}(Z)$ of the quantum $C^*$-algebra $\mathfrak A^B_\hb$ in the precise sense given by Proposition \ref{zizia} ({\it cf.}~also \eqref{ultimita}).
}
\end{remark}

%--------------------------------------------------------------------------------------------

\end{document}